\def\maxwidth{ %
  \ifdim\Gin@nat@width>\linewidth
    \linewidth
  \else
    \Gin@nat@width
  \fi
}
\definecolor{fgcolor}{rgb}{0.345, 0.345, 0.345}
\definecolor{shadecolor}{rgb}{.97, .97, .97}
\definecolor{messagecolor}{rgb}{0, 0, 0}
\definecolor{warningcolor}{rgb}{1, 0, 1}
\definecolor{errorcolor}{rgb}{1, 0, 0}
\newenvironment{knitrout}{}{} 
\newcommand{\Exp}[1]{{\mathbb E}[ #1 ]}
\newcommand{\Var}[1]{{\mathbb V}[ #1 ]}
\newcommand{\bl}[1]{{\mathbf #1}}
\newcommand{\bs}[1]{\boldsymbol #1}
\newtheorem{thm}{Theorem}
\newtheorem{cor}{Corollary}
\title{Smaller $p$-values via indirect information}
\author{Peter Hoff}
\date{\today}
\begin{document}
\maketitle

\begin{abstract} 
This article develops  $p$-values 
for evaluating means of normal populations
that make use of indirect or prior
information.  
A $p$-value of this type is based on a biased test statistic that is 
optimal on average with respect to a probability distribution 
that encodes indirect information about the 
mean parameter, resulting in a smaller $p$-value if the indirect information 
is accurate. 
In a variety of multiparameter settings, we show how to adaptively 
estimate the indirect information  for each mean parameter while still maintaining uniformity of the $p$-values under their 
null hypotheses. This is done using a linking model through which 
indirect information about the mean of one population may be obtained 
from the data of other populations. Importantly, the 
linking model does not need to be correct to maintain the uniformity of the 
$p$-values under their null hypotheses. 
This methodology is illustrated in several data analysis scenarios, 
including small area inference, spatially arranged populations, 
interactions in linear regression, 
and generalized linear models. 

\smallskip
\noindent \textit{Keywords:} 
Bayes, empirical Bayes, frequentist, hierarchical model, hypothesis test, multiple testing, 
multilevel model, small area estimation. 
\end{abstract}

\section{Introduction}
Many statistical procedures are built upon the 
evaluation of evidence that  one or more unknown scalar quantities 
are not equal to some null value, say zero. 
A controversial yet ubiquitous measure of evidence 
that a scalar $\theta$ is not zero 
is the 
$p$-value, a function of the data that is uniformly distributed on $(0,1)$ if 
$\theta=0$. 
One way to construct a $p$-value is with a collection of tests that, 
for each value of $\alpha\in (0,1)$, includes a single test 
with type I error rate equal to $\alpha$. Given such a collection, 
the $p$-value can be defined as the smallest value of $\alpha$ for which 
the corresponding test rejects $H$, assuming the tests satisfy a
monotonicity property \citep{dickhaus_2014}. 
As such, a $p$-value will likely be small if the true value of $\theta$ 
is one for which the corresponding 
tests have a high probability of rejecting $H$. 

If indirect or prior information about 
$\theta$ is available, then 
tests may be constructed that have high power at values of 
$\theta$ that are 
likely, at the expense of having lower power elsewhere. 
If the indirect information is accurate, then a $p$-value derived 
from such tests will generally be smaller 
than a standard $p$-value that does not make use of the information. 
In this article, $p$-values are developed that 
correspond to optimal tests when the indirect information 
is summarized with a probability distribution over $\theta$. 
Specifically, it is shown that if the direct information about $\theta$ is from  data $Y\sim N(\theta,\sigma^2)$, and the indirect 
information about $\theta$ is encoded 
with a normal distribution $\pi$, then 
the test statistic with optimal 
power on average with respect to $\pi$  may be derived analytically, and the
corresponding $p$-value may be expressed very simply as 
$1-|\Phi(Y/\sigma + 2\mu\sigma/\tau^2) - \Phi(-Y/\sigma)|$, 
where $\mu$ and $\tau^2$ are the mean and variance of $\pi$, and 
$\Phi$ is the standard normal  cumulative distribution function.
A similar $p$-value for the case that $\sigma^2$ is estimated from data is also 
derived. These $p$-values are uniformly distributed if $\theta=0$, regardless 
of the values of  $\mu$ and $\tau^2$. 

We refer to such a $p$-value as being  ``frequentist, assisted by Bayes'', or FAB. It is frequentist in the sense that it has guaranteed sampling 
properties (uniformity under the null distribution), and it is Bayesian 
in the sense that the corresponding tests maximize expected power, on average across $\theta$-values with respect to a probability distribution. 
The basic idea of using a Bayesian criteria to select a frequentist procedure
goes back at least to 
\citet{pratt_1963}, who constructed a confidence interval for a normal 
mean that has minimum expected width with respect to a distribution 
over the possible values of the mean. 
In fact, the FAB $p$-value
 may alternatively be derived by finding the smallest $\alpha$ for which 
Pratt's $1-\alpha$ confidence interval does not contain zero. 
Related to this, 
\citet{good_crook_1974} compared Bayes factors to a variety of other statistics
for evaluating equiprobability of multinomial probabilities, and 
referred to the use of such statistics and consideration of average 
power as a ``Bayes/non-Bayes compromise'' (see also \citet{good_1992} for 
other such compromises). 
More recently, \citet{servin_stephens_2007} used a Bayes factor to 
evaluate a global test of association between a phenotypic outcome 
and several genetic markers. A 
null distribution for their global test is obtained via permutation, 
as in this case the null hypothesis  
corresponds to an exchangeability assumption. 
Going the other direction, 
\citet{wakefield_2009} and \citet{benjamin_berger_2019}
study how standard $p$-value criteria should be
modified in order to produce inferences that resemble those that would be 
obtained using Bayes factors.

In addition to their use for measuring evidence, $p$-values are
also used as inputs into statistical decision procedures
that control various frequentist error rates. For example,
rejecting the hypothesis that $\theta=0$ whenever the $p$-value is
below some threshold $\alpha$ is a procedure that
of course maintains a
false rejection rate of $\alpha$. In multiparameter analyses,
$p$-values are often combined in ways to evaluate a global
hypothesis  \citep{birnbaum_1954,heard_rubin-delanchy_2018},
to adaptively estimate a threshold that maintains control of 
the false discovery rate
 \citep{benjamini_hochberg_1995}, or to maintain such a false discovery rate
among multiple groups of hypotheses \citep{barber_ramdas_2017}.
Frequentist multiparameter inference procedures such as these 
take as their input a list of 
$p$-values, typically without specifying how the $p$-values are constructed. 
In contrast, the focus of this article is on how, 
in multiparameter scenarios, 
adaptive FAB $p$-values that are smaller on average than standard $p$-values
may be constructed 
by sharing information across 
an entire dataset. 
Specifically, 
while the FAB $p$-value 
can be as small as half the usual $p$-value derived from a 
uniformly most powerful unbiased (UMPU) test, to realize this 
gain the indirect information needs to be in accord with the actual 
value of $\theta$. Scenarios where such accordance can be obtained
statistically include multiparameter problems 
where the parameters 
are believed to be similar to one another in some way. 
For example, if the data consist of independent samples 
from  each of several groups, 
then the indirect information for the parameter of one 
group may be inferred from the data of the others 
using a model for across-group heterogeneity of the parameters, which we 
refer to as a linking model. If the linking model is precise then the FAB 
$p$-values will be smaller than the UMPU $p$-values, on average across groups. If the linking model is diffuse, then the FAB $p$-values will be similar to the UMPU $p$-values, as the latter are a special case of the former. 
Importantly, the resulting FAB $p$-values are adaptive in that 
the parameters of the linking model are estimated from the data, 
and robust in the sense that the uniformity of the $p$-values 
under their null hypotheses 
does not depend on the linking model being correct.

The utility of indirect information for multiparameter 
inference problems has been well-recognized 
\citep{ghosh_rao_1994,efron_2010}. While
hierarchical models and empirical Bayes procedures 
provide a powerful and flexible class of methods for information-sharing in 
such problems, 
their 
performance guarantees typically hold globally, on average across parameter 
values rather than for each parameter individually. 
For parameter-specific inference, 
the basic strategy of using indirect information and a linking model 
to choose statistical procedures with parameter-specific 
frequentist guarantees was used previously in
 \citet{yu_hoff_2018},
\citet{hoff_yu_2019} and
 \citet{burris_hoff_2018} 
to obtain adaptive versions of Pratt's confidence interval 
for multiparameter settings. In the present article, the use of adaptive tests 
to construct a FAB $p$-value for each parameter is similar to the adaptive empirical likelihood ratio 
used by \citet{storey_2007}, which combines
 data from multiple experiments 
to form an empirically estimated significance threshold, 
and is then individually applied to each experiment. 
Storey's empirical likelihood ratio 
corresponds in some sense to a 
semiparametric exchangeable linking model, while the parametric 
approach proposed here permits other types of information to be shared, such as spatial 
information or parameter-specific explanatory variables. 
Other recently-developed 
evidence measures that incorporate auxiliary information include
the ``$s$-value'' \citep{grunwald_deheide_koolen_2019}
which can be viewed as being derived from a Bayes factor but has frequentist 
guarantees for 
sequential testing; 
and  the 
``skeptical $p$-value'' \citep{held_2018}, which combines information from 
an original study with that from a replication study.

The FAB $p$-value is derived and studied in Section 2. 
Section 3 illustrates how adaptive FAB $p$-values 
may be constructed in two data analysis scenarios
in which there is independent direct data for each parameter. 
These include 
small area inference using the Fay-Herriot model
\citep{fay_herriot_1979}, and inference for 
a sequence of means using data from a hidden Markov process. 
The methodology is extended in Section 4 to situations 
where the direct data for each parameter may be dependent, such as 
estimates of linear regression  coefficients. 

This article primarily focuses on scenarios involving normally (or $t$) distributed 
test statistics and normal prior distributions, as in these cases the 
FAB tests and $p$-values have surprisingly simple forms. To accommodate more general 
data-analysis scenarios, approximate FAB $p$-values
for asymptotically normal estimators are developed in Section 4, where it is 
proven that these $p$-values are 
asymptotically uniform under the null distribution. These approximate FAB $p$-values 
can be used, for example, to evaluate coefficients in 
widely-used 
generalized linear models 
such as logistic or Poisson regression. 
A discussion follows in Section 5. 

\section{FAB $p$-values}  

\subsection{Bayes-optimal tests and $p$-values}
Consider performing a level-$\alpha$ test of  $H:\theta=0$ 
based on the observation of 
$Y\sim N(\theta, \sigma^2)$, with $\sigma^2$ known.  
Suppose additionally that indirect information about 
$\theta$ is available, via prior knowledge or data independent 
of $Y$, that is encoded with a distribution having density $\pi$.  
The average power of a test function $f:\mathbb R \rightarrow [0,1]$ 
with respect to $\pi$ is given by 
\begin{align}
 \int \Exp{ f(Y) | \theta }  \pi(\theta) \, d\theta&= 
  \int \int f(y)  p(y|\theta)  \pi(\theta) \, dy \, d\theta    \nonumber \\
 &= \int f(y) \left (  \int  p(y|\theta)  \pi(\theta) \, d\theta \right ) dy   \nonumber  \\
 &= \int f(y) \, p_\pi(y) \, dy,  \label{eqn:priorpower}
\end{align}
where $p(y|\theta)$ is the $N(\theta,\sigma^2)$ density and 
$p_\pi$ is the marginal density of $Y$ under $\pi$.

According to our indirect information, 
the level-$\alpha$ test that has the highest probability of
rejecting $H$ is 
the function $f_\pi$ that maximizes (\ref{eqn:priorpower}) among all 
functions $f:\mathbb R \rightarrow [0,1]$ that satisfy 
 $\Exp{f(Y)|\theta} = \alpha$ when $\theta=0$. 
The corresponding test has a frequentist type I error rate of $\alpha$,
but it is also the Bayes-optimal level-$\alpha$ test if 
$\pi$ is viewed as a prior density. 
As such, we describe
this test as being ``frequentist, assisted by Bayes'', or FAB. 

By the Neyman-Pearson lemma, the FAB test is to reject $H$ if 
 $p_\pi(Y)/p_0(Y)$ exceeds its upper $\alpha$ quantile
under $H$, where $p_0(y)$ is the $N(0,\sigma^2)$ null density.  
If $\pi$ is a normal density, then the 
test has a simple analytic form. 
\begin{thm}
\label{thm:fabtest}
The level-$\alpha$ FAB test of $H$ corresponding to a $N(\mu,\tau^2)$ distribution for $\theta$ 
rejects when $|Y + \mu\sigma^2/\tau^2|>c$,  where 
  $c$ is the solution to 
$[ \Phi(c/\sigma+\mu \sigma/\tau^2) +
                \Phi(c/\sigma-\mu \sigma/\tau^2) ]/2  = 1-\alpha/2$   
 and $\Phi$ is the standard normal cumulative distribution function. 
\end{thm}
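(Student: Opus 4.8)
The plan is to build directly on the Neyman--Pearson characterization already established above: the level-$\alpha$ FAB test rejects $H$ exactly when the likelihood ratio $p_\pi(Y)/p_0(Y)$ exceeds its upper $\alpha$ quantile under $H$. The first step is to identify the marginal density $p_\pi$. Since $Y\mid\theta\sim N(\theta,\sigma^2)$ and $\theta\sim N(\mu,\tau^2)$, the usual convolution of normals gives $p_\pi = N(\mu,\sigma^2+\tau^2)$, while $p_0 = N(0,\sigma^2)$.

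The second step is to reduce this likelihood ratio to a one-dimensional sufficient statistic. Writing out $\log\{p_\pi(y)/p_0(y)\}$ and discarding all terms that do not depend on $y$, one is left with a quadratic in $y$ of the form
\begin{equation*}
\frac{\tau^2}{2\sigma^2(\sigma^2+\tau^2)}\, y^2 + \frac{\mu}{\sigma^2+\tau^2}\, y + \text{const}.
\end{equation*}
Because $\tau^2>0$, the coefficient of $y^2$ is strictly positive, so completing the square shows that the log-likelihood ratio is a strictly increasing function of $(y + \mu\sigma^2/\tau^2)^2$, equivalently of $|y + \mu\sigma^2/\tau^2|$. Hence the event ``$p_\pi(Y)/p_0(Y)$ exceeds a threshold'' is equivalent to ``$|Y + \mu\sigma^2/\tau^2| > c$'' for some constant $c\ge 0$, which establishes the stated form of the rejection region.

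The final step is to pin down $c$ through the size constraint. Under $H$ we have $Y\sim N(0,\sigma^2)$, so $Y + \mu\sigma^2/\tau^2 \sim N(\mu\sigma^2/\tau^2,\sigma^2)$, and I would compute
\begin{equation*}
\Pr\!\left(\,|Y + \mu\sigma^2/\tau^2| > c \mid \theta=0\,\right) = 1 - \Phi\!\left(\tfrac{c}{\sigma} - \tfrac{\mu\sigma}{\tau^2}\right) + \Phi\!\left(-\tfrac{c}{\sigma} - \tfrac{\mu\sigma}{\tau^2}\right).
\end{equation*}
Setting this equal to $\alpha$ and applying the identity $\Phi(-x) = 1-\Phi(x)$ to the last term collapses the expression into the symmetric form in the statement, $[\Phi(c/\sigma + \mu\sigma/\tau^2) + \Phi(c/\sigma - \mu\sigma/\tau^2)]/2 = 1 - \alpha/2$.

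I do not expect a genuine obstacle, as the content is a short computation. The one place warranting care is confirming the sign of the $y^2$ coefficient, so that the likelihood ratio is indeed \emph{increasing} in $|Y + \mu\sigma^2/\tau^2|$ rather than decreasing; this is what guarantees that the two-sided rejection region corresponds to the upper tail of the likelihood ratio. Relatedly, one must correctly convert the two one-sided tail probabilities into the single symmetric expression involving both $\Phi(c/\sigma + \mu\sigma/\tau^2)$ and $\Phi(c/\sigma - \mu\sigma/\tau^2)$.
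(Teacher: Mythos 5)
Your proposal is correct and follows essentially the same route as the paper's proof: compute the marginal $N(\mu,\sigma^2+\tau^2)$ density, observe that the Neyman--Pearson likelihood ratio is monotonically increasing in $|y+\mu\sigma^2/\tau^2|$, and hence the rejection region has the stated form. You additionally carry out the tail-probability calculation that calibrates $c$ to size $\alpha$, a step the paper's proof leaves implicit (it only establishes the monotonicity, summarizing the quadratic computation as ``some algebra''), so your write-up is a slightly more complete version of the same argument.
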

\begin{proof} 
If $Y|\theta\sim N(\theta,\sigma^2)$ and
$\theta\sim N(\mu,\tau^2)$ then the marginal distribution 
of $Y$ is $N(\mu, \sigma^2+\tau^2)$. 
The likelihood ratio is then 
\[ \frac{p_\pi(y)}{p_0(y) } = \exp( - [ (y^2- 2\mu y +\mu^2)/(\sigma^2+\tau^2) - y^2/\sigma^2 ]/2  ).\]
Some algebra shows that this is monotonically increasing in 
$|y+ \mu \sigma^2/\tau^2 |$. 
\end{proof}

Different prior distributions for $\theta$ typically lead to different 
Bayes-optimal test statistics. 
However, 
one particular prior distribution, used in an example in Section 4, leads 
to the same test statistic as a normal prior distribution. 

\begin{cor}
The level-$\alpha$ FAB test of $H$ when the 
prior distribution for $\theta$ is a
mixture of a 
$N(\mu,\tau^2)$ distribution with a point-mass at zero 
is the same as the test when the prior distribution is 
$N(\mu,\tau^2)$.  
\label{cor:mix}
\end{cor}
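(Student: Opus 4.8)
The plan is to compute the marginal density of $Y$ under the mixture prior and feed it into the Neyman--Pearson characterization of the FAB test stated just before Theorem~\ref{thm:fabtest}. Write the mixture prior as $\pi = \lambda\,\delta_0 + (1-\lambda)\,\nu$, where $\delta_0$ is a point mass at $\theta=0$, $\nu$ is the $N(\mu,\tau^2)$ density, and $\lambda\in(0,1)$ is an arbitrary mixing weight. Because $Y\mid\theta\sim N(\theta,\sigma^2)$, integrating the sampling density $p(y\mid\theta)$ against $\delta_0$ returns exactly the null density $p_0(y)$, while integrating against $\nu$ returns the $N(\mu,\sigma^2+\tau^2)$ marginal already computed in the proof of Theorem~\ref{thm:fabtest}; call it $q(y)$. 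Hence the mixture marginal factors additively as $p_\pi(y) = \lambda\,p_0(y) + (1-\lambda)\,q(y)$.

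Next I would form the likelihood ratio that drives the FAB test. Dividing through by $p_0$ gives
\[ \frac{p_\pi(y)}{p_0(y)} = \lambda + (1-\lambda)\,\frac{q(y)}{p_0(y)}, \]
which is a strictly increasing affine function of the pure-normal-prior likelihood ratio $q(y)/p_0(y)$, since the coefficient $1-\lambda$ is positive. The key observation is that a strictly increasing transformation of a test statistic leaves its Neyman--Pearson rejection region unchanged. The proof of Theorem~\ref{thm:fabtest} established that $q(y)/p_0(y)$ is monotone increasing in $|y+\mu\sigma^2/\tau^2|$; composing with the affine map above preserves this, so $p_\pi(y)/p_0(y)$ is monotone increasing in the same quantity.

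It then follows that the event in which $p_\pi(Y)/p_0(Y)$ exceeds its upper-$\alpha$ quantile under $H$ coincides exactly with the event $|Y+\mu\sigma^2/\tau^2|>c$. Moreover, since the null distribution of $Y$ is $N(0,\sigma^2)$ in both problems, the level-$\alpha$ calibration is identical, so the critical value $c$ is the same solution to $[\Phi(c/\sigma+\mu\sigma/\tau^2)+\Phi(c/\sigma-\mu\sigma/\tau^2)]/2 = 1-\alpha/2$ that appears in Theorem~\ref{thm:fabtest}. The two tests therefore have identical rejection regions.

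I expect the only subtlety---rather than a genuine obstacle---to be bookkeeping the contribution of the atom at zero: one must recognize that marginalizing the sampling density against $\delta_0$ yields precisely $p_0$, so the point mass contributes a term proportional to the \emph{denominator} of the likelihood ratio and thus collapses to a harmless additive constant. Once that is seen, monotonicity of the likelihood ratio, and hence equality of the two tests, is inherited directly from Theorem~\ref{thm:fabtest} with no further calculation.
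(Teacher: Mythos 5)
Your proof is correct and follows essentially the same route as the paper: both write the mixture marginal as an affine combination of $p_0$ and the pure-normal-prior marginal, observe that the resulting likelihood ratio is a strictly increasing affine function of the $w=1$ likelihood ratio, and conclude the rejection regions coincide. The only cosmetic difference is that you place the mixing weight on the point mass rather than on the normal component, and you spell out the (correct) calibration remark that the paper leaves implicit.
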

\begin{proof}
Let $p_w(y)$ be the marginal density of $Y$ under the 
mixture prior distribution with weight $w$ on the normal component. 
Then $p_w(y) = w p_\pi(y) + (1-w)p_0(y)$, where $p_\pi$ and $p_0$ 
are as in the proof of Theorem \ref{thm:fabtest}. The likelihood ratio is 
$[ w p_\pi(y) + (1-w) p_0(y) ] /p_0(y) $, which simplifies to 
 $w p_{\pi}(y) /p_0(y) + 1-w$. This is monotonically increasing in $p_\pi(y)/p_0(y)$, the 
likelihood ratio when $w=1$, and so the tests are the same. 
\end{proof}

The FAB $p$-value function is the function that maps 
potential values $y$ of $Y$ to the 
smallest value of $\alpha$ such that 
the FAB level-$\alpha$ test rejects when $Y=y$. 
It is also the probability that $|Y + \mu \sigma^2/\tau^2| > |y+ \mu \sigma^2/\tau^2|$  when   $Y \sim N(0,\sigma^2)$ 
(see, for example, \citet[Chapter~2]{dickhaus_2014}). 
The functional form of the FAB $p$-value is quite simple, and is easily 
derived using this latter 
characterization. 

\begin{thm}
The FAB $p$-value function may be written as 
$1- | \Phi([y+2 a]/\sigma  ) -\Phi(-y/\sigma) |$ 
where $a= \mu\sigma^2/\tau^2$, or as 
$p(z,b)=1- | \Phi(z+ b ) -\Phi(-z) |$, 
where $z=y/\sigma$ and $b=2 \mu\sigma/\tau^2$.
\label{thm:fabpvalue} 
\end{thm}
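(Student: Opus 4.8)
The plan is to start from the second characterization of the $p$-value recalled just above the statement: since the FAB test of Theorem~\ref{thm:fabtest} rejects exactly when $|Y + a|$ exceeds a threshold (with $a = \mu\sigma^2/\tau^2$), the FAB $p$-value at $y$ equals $\Pr(|Y + a| > |y + a|)$ for $Y \sim N(0,\sigma^2)$. The entire task therefore reduces to evaluating this single tail probability as an explicit function of $y$, with no further appeal to optimality or to the threshold equation for $c$.

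First I would center the shifted statistic. Writing $W = Y + a \sim N(a,\sigma^2)$ and setting $t = |y + a| \ge 0$, the probability becomes $\Pr(|W| > t) = 1 - \Pr(-t \le W \le t)$, and standardizing $W$ gives
$$\Pr(|W| > t) = 1 - [\, \Phi((t - a)/\sigma) - \Phi((-t - a)/\sigma) \,].$$
The only remaining work is to unpack $t = |y+a|$ and simplify the bracket to the claimed form.

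The main (and essentially only) obstacle is this absolute value, which I would resolve by splitting on the sign of $y + a$. When $y + a \ge 0$ one has $t - a = y$ and $-t - a = -(y + 2a)$; when $y + a < 0$ the two arguments swap, giving $t - a = -(y + 2a)$ and $-t - a = y$. Applying the reflection identity $\Phi(-x) = 1 - \Phi(x)$ in each case rewrites the bracketed difference as $\pm[\Phi((y+2a)/\sigma) - \Phi(-y/\sigma)]$. The key point is that the relevant sign is dictated precisely by the sign of $y + a$: because $\Phi$ is increasing, $\Phi((y+2a)/\sigma) \ge \Phi(-y/\sigma)$ if and only if $y + 2a \ge -y$, i.e.\ $y + a \ge 0$. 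Hence in both branches the bracket collapses to $|\Phi((y+2a)/\sigma) - \Phi(-y/\sigma)|$, and the $p$-value equals $1 - |\Phi((y+2a)/\sigma) - \Phi(-y/\sigma)|$, the first claimed form. The second form is then immediate from the substitutions $z = y/\sigma$ and $b = 2a/\sigma = 2\mu\sigma/\tau^2$, under which $(y+2a)/\sigma = z + b$ and $-y/\sigma = -z$. I expect the case analysis to look like the crux but to turn out cosmetic, since the absolute value automatically absorbs both branches once the coincidence of the two signs is noticed.
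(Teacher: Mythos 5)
Your proposal is correct and follows essentially the same route as the paper: start from the characterization of the $p$-value as $\Pr(|Y+a|>|y+a|)$ under $Y\sim N(0,\sigma^2)$, evaluate the two-sided tail probability explicitly in terms of $\Phi$, split on the sign of $y+a$ (the paper's $z+b/2$), and observe that the absolute value absorbs both branches because the sign of $y+a$ determines which of $\Phi(z+b)$ and $\Phi(-z)$ is larger. The only difference is cosmetic bookkeeping (you center $W=Y+a$ and compute one minus a central probability, while the paper standardizes first and sums two tail probabilities).
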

\begin{proof}
The $p$-value is 
\begin{align*} 
\Pr(| Y + \mu \sigma^2/\tau^2| > |y+ \mu \sigma^2/\tau^2| )&= 
\Pr( | Z+b/2| > |z+b/2| )  \\
&=  \Pr(  Z+b/2 < -|z+b/2| )  + \Pr( Z+b/2 > |z+b/2| ) \\
&= \Phi( -|z+b/2|-b/2) + 1- \Phi( |z+b/2|-b/2 ).
\end{align*}
First suppose $z+b/2>0$. In this case, the $p$-value is 
\[\Phi( -z-b) + 1- \Phi( z ) = 1-\Phi(z+b) + \Phi(-z) . 
\]
Note that $z+b/2>0$ implies 
$z+b> -z$, 
and so the $p$-value can be 
written 
$ 1- | \Phi(z+b) - \Phi(-z)|$. 
Now suppose $z+b/2<0$, so that the $p$-value is 
\[\Phi( z) + 1- \Phi( -z-b ) = 1-\Phi(-z) + \Phi(z+b) . 
\]
In this case, $z+b< -z$ so the $p$-value may be written 
as $1-| \Phi(-z)  - \Phi(z+b) |$, 
which is the same as in the case $z+b/2>0$.  So in either case, 
the $p$-value is 
$1- | \Phi(z+b ) -\Phi(-z) |$.
\end{proof}

\begin{figure}
\begin{knitrout}\footnotesize
\definecolor{shadecolor}{rgb}{0.969, 0.969, 0.969}\color{fgcolor}

{\centering \includegraphics[width=5.5in]{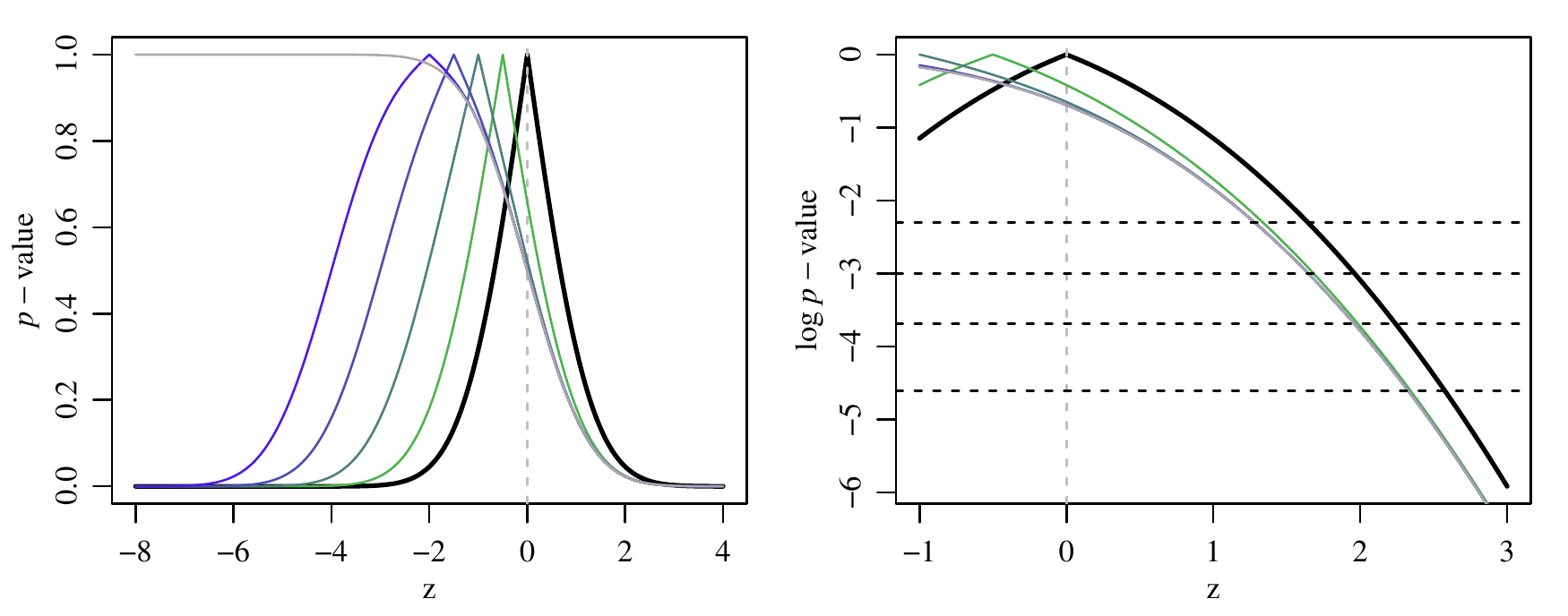} 

}

\end{knitrout}
\caption{FAB and UMPU $p$-value functions, for values of $b$ ranging from 
0 to 4 from right to left. The one-sided $p$-value function $1-\Phi(z)$ is given in gray. The figure on the right includes horizontal dashed lines 
at $\log .10$, $\log .05$, $\log .025$ and $\log .01$. }  
\label{fig:pFABz}
\end{figure}

Note that the standard $p$-value based on the UMPU test statistic 
$|Z|$ 
can be written as $p(Z,0)=1-|\Phi(Z)-\Phi(-Z)|$, and so 
is a special case of the FAB $p$-value with  $\mu=0$. 
For non-zero $\mu$, 
if the indirect information is accurate in the sense that $\theta$ and $\mu$ are of the same sign, 
then $Y$ and $2a$ (or $Z$ and $b$) will likely have the same sign, 
 making it probable that the absolute difference $|\Phi(Z+b) -  \Phi(-Z)|$
is larger than $|\Phi(Z) -  \Phi(-Z)|$, thereby  making the
FAB $p$-value smaller than the UMPU $p$-value. However, if 
the sign of $\mu$ does not match that of $\theta$, then this absolute 
difference 
will likely be smaller than  $|\Phi(Z) -  \Phi(-Z)|$,  making 
the FAB $p$-value larger than the UMPU $p$-value. This can be made more 
precise as follows:
\begin{thm} 
\label{prop:pbetter}
Let $Z\sim N(\theta,1)$. Then 
\[
 \Pr( p(Z,b) < p(Z,0) |\theta)> \Phi( \text{sign}(b) \times \theta ). 
\] 
\end{thm}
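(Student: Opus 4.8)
The plan is to rewrite the event in terms of a single scalar function and then bound its probability by exhibiting an explicit favorable region. Since $p(z,b)=1-|\Phi(z+b)-\Phi(-z)|$, the inequality $p(z,b)<p(z,0)$ is equivalent to $g(z):=|\Phi(z+b)-\Phi(-z)|-|\Phi(z)-\Phi(-z)|>0$. Thus the quantity of interest is $\Pr(g(Z)>0\mid\theta)$, and the goal is to produce a set $\{g>0\}$ whose $N(\theta,1)$-probability exceeds $\Phi(\text{sign}(b)\,\theta)$. Throughout I take $b\neq 0$, as signalled by the appearance of $\text{sign}(b)$.

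First I would reduce to the case $b>0$ by symmetry. A short computation using $\Phi(-t)=1-\Phi(t)$ gives $\Phi(-z-b)-\Phi(z)=-[\Phi(z+b)-\Phi(-z)]$, hence $p(z,b)=p(-z,-b)$, and trivially $p(z,0)=p(-z,0)$. Therefore, for $b<0$, writing $W=-Z\sim N(-\theta,1)$, the event $\{p(Z,b)<p(Z,0)\}$ coincides with $\{p(W,-b)<p(W,0)\}$ where $-b>0$; applying the $b>0$ conclusion to $W$ then yields a bound by $\Phi(-\theta)=\Phi(\text{sign}(b)\,\theta)$, exactly as required.

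For $b>0$ I would first establish the containment $\{z>0\}\subseteq\{g>0\}$. When $z>0$ we have $z+b>z>0>-z$, so monotonicity of $\Phi$ gives $\Phi(z+b)>\Phi(z)>\Phi(-z)$; both bracketed differences are then positive and the first exceeds the second, so the absolute values may be dropped and $g(z)>0$. This already delivers the non-strict bound $\Pr(g(Z)>0)\geq\Pr(Z>0)=\Phi(\theta)$.

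The main step, and the one requiring care, is upgrading this to a strict inequality by showing the favorable region dips below the origin. Evaluating at the boundary gives $g(0)=|\Phi(b)-\Phi(0)|-0=\Phi(b)-1/2>0$ because $b>0$. Since $g$ is continuous, there is a $\delta>0$ with $g>0$ on $(-\delta,0]$, and combining this with the containment above gives $(-\delta,\infty)\subseteq\{g>0\}$. Consequently $\Pr(g(Z)>0)\geq\Pr(Z>-\delta)=\Phi(\theta+\delta)>\Phi(\theta)$, the last inequality being strict monotonicity of $\Phi$. I expect essentially all the difficulty to sit in this final paragraph — confirming $g(0)>0$ and using continuity to capture the extra sliver of probability just below zero; the remaining arguments are routine monotonicity bookkeeping.
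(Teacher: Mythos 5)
Your proof is correct and follows essentially the same route as the paper's: for $b>0$ you show $\{Z>0\}$ is contained in the event $\{p(Z,b)<p(Z,0)\}$ via monotonicity of $\Phi$, and handle $b<0$ by the symmetry $p(z,b)=p(-z,-b)$ (the paper simply says ``similarly''). The one place you go beyond the paper is the final step: the paper asserts the strict inequality $\Pr(p(Z,b)<p(Z,0)\mid\theta)>\Phi(\theta)$ directly from the containment, whereas you justify the strictness by checking $g(0)=\Phi(b)-1/2>0$ and using continuity to capture a sliver $(-\delta,0]$ of positive probability --- a small but genuine gap in the published argument that your version closes.
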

\begin{proof}
If $b$ is positive then $Z>0$ implies $p(Z,b)<p(Z,0)$, 
so $\Pr( p(Z,b) < p(Z,0)|\theta)> \Pr(Z>0|\theta) =  \Phi(\theta)$. 
Similarly, if $b$ is negative then 
 $\Pr( p(Z,b) < p(Z,0)|\theta)> \Phi(-\theta)$. 
Combining these gives the result. 
\end{proof}

Figure \ref{fig:pFABz}
plots the UMPU and FAB $p$-value functions 
for a variety of values of $b$. 
The FAB $p$-value functions are symmetric 
around $-b/2=  -\mu \sigma/\tau^2$, and so if 
$\mu$ is positive these $p$-values will be lower than the UMPU $p$-value 
if $z$ is 
positive, an can even be lower for a range of negative $z$-values. 
This is to be expected - the FAB $p$-value will be lower if 
$z$ and $\mu$ match, but can be higher otherwise. 
Like the UMPU $p$-value, 
for each finite value of $b$ the FAB $p$-value decreases to zero 
as $|z|$ increases. 
As shown in the right-side panel of Figure \ref{fig:pFABz}, 
for large $|z|$ 
the FAB $p$-value can be as small as half of the UMPU $p$-value, 
if  $z \mu>0$. For example, for the non-zero values of $b$ shown in the figure, 
when the UMPU $p$-value is .10, the 
FAB $p$-values are close to .05. 
However, the ratio of the FAB $p$-value to the UMPU $p$-value can 
be unboundedly large if there is a mismatch between the data and 
the indirect information
(that is, if $z \mu<0$). These comments are summarized as follows:

\begin{thm}
Let $p(z,b) = 1- | \Phi(z+b ) -\Phi(-z) |$. Then 
$p(z,0)$ is the UMPU $p$-value function, and for 
$b>0$, 
\begin{enumerate} 
\item $\lim_{z\rightarrow \infty} p(z,b)/p(z,0) = 1/2$;
\item $\lim_{z\rightarrow -\infty} p(z,b)/p(z,0) = \infty$.
\end{enumerate}
\end{thm}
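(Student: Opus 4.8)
The plan is to reduce $p(z,b)$ and $p(z,0)$ to explicit sums of Gaussian tail probabilities on each half-line, write the ratio as a constant plus a ratio of two tails, and evaluate that tail ratio by a standard limiting argument. First I would record the closed forms. The proof of Theorem~\ref{thm:fabpvalue} already splits on the sign of $z+b/2$: for $z+b/2>0$ it gives $p(z,b)=1-\Phi(z+b)+\Phi(-z)=\Phi(-z-b)+\Phi(-z)$, and for $z+b/2<0$ it gives $p(z,b)=\Phi(z)+\Phi(z+b)$, where I have used $1-\Phi(x)=\Phi(-x)$. Setting $b=0$ yields $p(z,0)=2\Phi(-z)$ for $z>0$ and $p(z,0)=2\Phi(z)$ for $z<0$, i.e.\ the usual two-sided value $2\Phi(-|z|)$.

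Next I would form the ratio on the two half-lines. For $z$ large and positive, both $z>0$ and $z+b/2>0$ hold, so
\[
\frac{p(z,b)}{p(z,0)}=\frac{\Phi(-z-b)+\Phi(-z)}{2\Phi(-z)}=\frac12\left(1+\frac{\Phi(-z-b)}{\Phi(-z)}\right),
\]
while for $z$ large and negative both $z<0$ and $z+b/2<0$ hold, so
\[
\frac{p(z,b)}{p(z,0)}=\frac{\Phi(z)+\Phi(z+b)}{2\Phi(z)}=\frac12\left(1+\frac{\Phi(z+b)}{\Phi(z)}\right).
\]

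Finally I would evaluate the two tail ratios, each of which is an indeterminate $0/0$ limit. Applying L'H\^opital's rule and using that the standard normal density $\phi$ is even, each ratio collapses to $\phi(z+b)/\phi(z)=\exp(-bz-b^2/2)$. As $z\to+\infty$ this tends to $0$, giving the first limit $1/2$; as $z\to-\infty$ it tends to $+\infty$, giving the second limit $+\infty$. The only point needing care is the justification of this tail behavior: L'H\^opital applies because numerator and denominator both vanish in the relevant limit (all of $\Phi(-z-b),\Phi(-z),\Phi(z),\Phi(z+b)$ tend to $0$), and alternatively one may invoke Mills' ratio $\Phi(-x)\sim\phi(x)/x$ to read off the same exponential factor directly. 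I do not anticipate any genuine obstacle beyond this routine computation.
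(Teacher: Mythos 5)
Your proposal is correct and follows essentially the same route as the paper: split into the regimes $z+b/2>0$ and $z+b/2<0$, rewrite the ratio as $\tfrac12\bigl(1+\text{tail ratio}\bigr)$, and evaluate the tail ratio by L'H\^opital (the paper writes $1-\Phi(z+b)$ where you write $\Phi(-z-b)$, which is the same quantity). Your explicit computation of $\phi(z+b)/\phi(z)=\exp(-bz-b^2/2)$ and the Mills-ratio remark just make explicit what the paper leaves to the reader.
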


\begin{proof}
For positive $z$ and $b$ the ratio of the $p$-values is 
\begin{align*}
  p(z,b)/p(z,0) &= \frac{ 1- \Phi(z+b) + \Phi(-z) }{ 1- \Phi(z) + \Phi(-z) } \\ 
\ &= \frac{ 1- \Phi(z+b) + \Phi(-z) }{ 2 \Phi(-z) }  \\
&= \frac{1}{2} \left(  1+ \frac{ 1 - \Phi(z+b) }{1-\Phi(z) } \right ), 
\end{align*}
which converges to 1/2 as $z\rightarrow \infty$ by L'H\^opital's rule. 
For  $z< -b/2 < 0$ the  ratio is 
\begin{align*}
\frac{1+\Phi(z+b) - \Phi(-z) }{ 1+\Phi(z)-\Phi(-z) }  
&= \frac{1}{2} \frac{\Phi(z+b) + \Phi(z)}{\Phi(z)} \\
& =  \frac{1}{2}\left ( 1+ \frac{\Phi(z+b)}{\Phi(z) } \right ), 
\end{align*}
which diverges to infinity as $z\rightarrow -\infty$. 
\end{proof}

The FAB $p$-value is based on a test that has better performance on 
one side of the real line than the other, suggesting a relationship to 
a one-sided test. Indeed, one-sided $p$-values are limiting cases of FAB $p$-values. 
\begin{thm} \ 
\begin{enumerate}
\item $\lim_{b\rightarrow \infty} p(z,b) = 1- \Phi(z) ; $
\item $\lim_{b\rightarrow -\infty} p(z,b) = \Phi(z). $ 
\end{enumerate}
\end{thm}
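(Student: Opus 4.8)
The plan is to work directly from the closed form $p(z,b) = 1 - |\Phi(z+b) - \Phi(-z)|$ established in Theorem~\ref{thm:fabpvalue}, combined with three elementary facts: $\lim_{t\to\infty}\Phi(t)=1$, $\lim_{t\to-\infty}\Phi(t)=0$, and the symmetry identity $\Phi(-z)=1-\Phi(z)$. Throughout, $z$ is held fixed while $b$ is sent to $\pm\infty$, so the only quantity that changes is $\Phi(z+b)$; the term $\Phi(-z)$ is a constant. The entire difficulty, such as it is, reduces to determining the eventual sign of the expression inside the absolute value so that the bars can be removed correctly before passing to the limit.

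For the first limit, I would observe that as $b\to\infty$ the argument $z+b\to\infty$, so $\Phi(z+b)\to 1$. Since $\Phi(-z)<1$ for every fixed $z$, there is a threshold beyond which $\Phi(z+b)>\Phi(-z)$, and for all such $b$ the absolute value opens as $\Phi(z+b)-\Phi(-z)$. Substituting gives $p(z,b)=1-\Phi(z+b)+\Phi(-z)$, whose limit is $1-1+\Phi(-z)=\Phi(-z)=1-\Phi(z)$, establishing part~1.

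The second limit is handled symmetrically. As $b\to-\infty$ we have $z+b\to-\infty$ and hence $\Phi(z+b)\to 0$; because $\Phi(-z)>0$ for every fixed $z$, for all sufficiently negative $b$ the difference $\Phi(z+b)-\Phi(-z)$ is negative, so the absolute value opens with the opposite sign, namely $\Phi(-z)-\Phi(z+b)$. Then $p(z,b)=1-\Phi(-z)+\Phi(z+b)$, which tends to $1-\Phi(-z)+0=\Phi(z)$, establishing part~2.

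I do not anticipate any genuine obstacle: there is no indeterminate form and no delicate estimate, only a term-by-term passage to the limit justified by continuity of $\Phi$. The one point that warrants explicit attention is the sign determination for the absolute value in each case, and I would state the threshold argument plainly (using $\Phi(-z)<1$ in the first case and $\Phi(-z)>0$ in the second) so that the removal of the bars is unambiguous.
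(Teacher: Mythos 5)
Your proof is correct: the sign determination via $\Phi(-z)<1$ (resp.\ $\Phi(-z)>0$) lets you drop the absolute value for all sufficiently large (resp.\ negative) $b$, and the limits then follow from $\Phi(z+b)\to 1$ or $0$ together with $\Phi(-z)=1-\Phi(z)$. The paper states this theorem without any proof, evidently regarding it as immediate from the formula $p(z,b)=1-|\Phi(z+b)-\Phi(-z)|$, and your argument is exactly the routine verification the authors left implicit.
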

\noindent In other words, for large positive $b$ the FAB $p$-value approximates the $p$-value from the UMP tests 
of $H:\theta<0$ versus $K:\theta>0$, and vice versa for large 
negative $b$.

\subsection{Null and alternative distributions of FAB $p$-values}

Like UMPU $p$-values, FAB $p$-values are uniformly distributed 
under the null hypothesis. 
\begin{thm}
\label{thm:uniformity}
Let $Z\sim N(0,1)$. Then for any $b\in \mathbb R$,
$\Pr( 1- |\Phi( Z + b) - \Phi(-Z) | \leq  u ) =u$.
\end{thm}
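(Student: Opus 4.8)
The plan is to recognize the FAB $p$-value as the survival function of a continuous statistic evaluated at that same statistic, and then apply the probability integral transform. Under the null $\theta=0$ we have $Z\sim N(0,1)$, so $T=|Z+b/2|$ is a genuine realized statistic. The key identity, which is exactly the computation already carried out in the proof of Theorem~\ref{thm:fabpvalue}, is that
\[
p(z,b)=1-|\Phi(z+b)-\Phi(-z)|=\Pr_{Z'\sim N(0,1)}\big(|Z'+b/2|>|z+b/2|\big),
\]
so that $p(Z,b)=S(T)$, where $S(t)=\Pr(T>t)$ is the survival function of $T$ under $Z\sim N(0,1)$.

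First I would write down the law of $T$ explicitly. For $t\ge 0$,
\[
F_T(t)=\Pr(|Z+b/2|\le t)=\Phi(t-b/2)-\Phi(-t-b/2),
\]
which is continuous in $t$, equals $0$ at $t=0$, increases to $1$ as $t\to\infty$, and has derivative $\phi(t-b/2)+\phi(-t-b/2)>0$ on $(0,\infty)$. Hence $T$ has an absolutely continuous distribution whose CDF $F_T$ is continuous and strictly increasing on its support $[0,\infty)$.

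Next I would invoke the probability integral transform. Since $T$ has a continuous CDF $F_T$, the random variable $F_T(T)$ is uniform on $(0,1)$, and therefore so is $S(T)=1-F_T(T)$. Because $p(Z,b)=S(T)$, this gives $\Pr(p(Z,b)\le u)=u$ for every $u\in(0,1)$, as claimed. Equivalently, to avoid citing the transform as a black box, one can argue directly: $S$ is continuous and strictly decreasing on $(0,\infty)$, so for each $u\in(0,1)$ there is a unique $t_u$ with $S(t_u)=u$; then $\{S(T)\le u\}=\{T\ge t_u\}$, and since $\Pr(T=t_u)=0$ we get $\Pr(S(T)\le u)=\Pr(T\ge t_u)=S(t_u)=u$.

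The only place requiring care is the first step, namely confirming the identity $p(Z,b)=S(T)$; this is not quite immediate from the displayed form $1-|\Phi(z+b)-\Phi(-z)|$ because the absolute value must be unwound into the two sign cases $z+b/2>0$ and $z+b/2<0$. However, that case analysis is precisely the one performed in the proof of Theorem~\ref{thm:fabpvalue}, where the survival-function expression $\Phi(-|z+b/2|-b/2)+1-\Phi(|z+b/2|-b/2)$ was shown to reduce to $1-|\Phi(z+b)-\Phi(-z)|$, so I would simply reuse that computation. Once the $p$-value is identified as a survival function applied to a continuous random variable, uniformity follows with no dependence on $b$, matching the claim that the result holds for every $b\in\mathbb{R}$.
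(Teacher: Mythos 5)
Your proof is correct and follows essentially the same route as the paper: the paper's proof first invokes exactly your observation that the $p$-value is the null survival function of the continuous statistic $|Z+b/2|$ evaluated at itself (citing Dickhaus for the probability integral transform), and its ``alternative'' direct computation---identifying $\{p(Z,b)<u\}$ with the event that $Z$ falls outside an interval of half-width $c$ centered at $-b/2$---coincides with your explicit argument via the threshold $t_u$. You merely spell out details (the CDF of $T=|Z+b/2|$ and its strict monotonicity) that the paper delegates to the citation.
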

\begin{proof} 
This is
guaranteed by the fact that the $p$-value is the probability of
obtaining a value of $|Z+ b/2|$
 as or more extreme than the one observed,
and that the null distribution of the test statistic is continuous
\citep[Chapter~2]{dickhaus_2014}. 
Alternatively, the result can be shown directly as follows:
Since the FAB $p$-value function $p(z,b)$ is symmetrically decreasing 
from $z=-b/2$, the probability that 
$p(Z,b)$ is less than $u$ is the probability that $Z$ lies outside 
the interval 
$(-b/2-c,-b/2+c)$, where $c>0$ satisfies $p(-b/2+c,b)=u$.  
Plugging this into the $p$-value function gives 
\begin{align*}
u &= 1- | \Phi(b/2 + c) - \Phi(b/2 -c )| \\
  &= 1-  \Phi(b/2 + c) + \Phi(b/2 -c ) \\
  & = \Phi(-b/2-c) + [1-\Phi(-b/2 +c)], 
\end{align*}
so the probability that $Z$ lies outside of $(-b/2-c,-b/2+c)$, 
and that $p(Z,b)<u$, is $u$. 
\end{proof}

\begin{figure}[ht]
\begin{knitrout}\footnotesize
\definecolor{shadecolor}{rgb}{0.969, 0.969, 0.969}\color{fgcolor}

{\centering \includegraphics[width=5.5in]{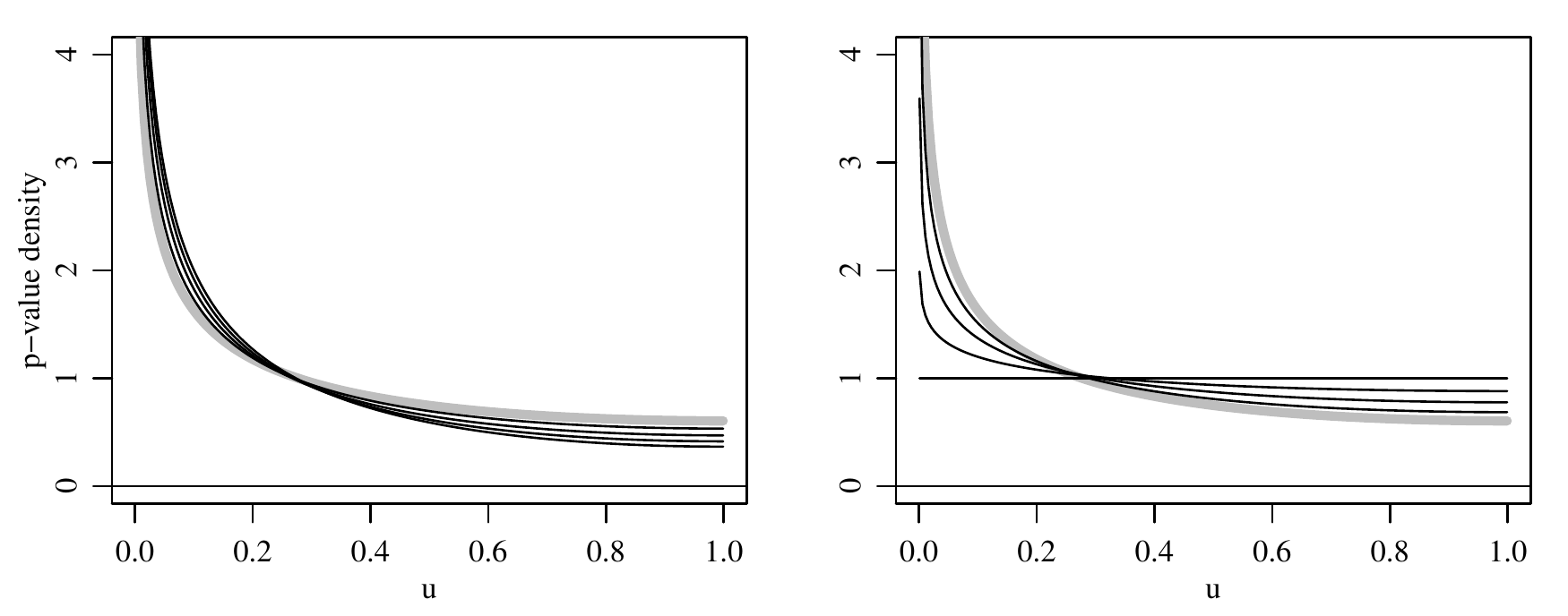} 

}

\end{knitrout}
\caption{Densities of $p$-values for  
 $b\in \{ 0,.25,.5,.75,1\}$, and 
$\theta=1$ (left)
 and $\theta=-1$ (right). 
The density 
corresponding to $b=0$ (the UMPU $p$-value) is displayed with a thick gray 
line.}  
\label{fig:densities}
\end{figure}

The CDF and density of the FAB $p$-value under non-zero values
of $\theta$ may be of use for power calculations and theoretical study.
To find these quantities,  recall that 
the FAB $p$-value function is symmetric around $-b/2$, and so 
each possible value $u$ of the $p$-value other than 1 is achieved by two
values of $z$, say $z_l$ and $z_h$, which satisfy $z_l< -b/2 < z_h$. 
These values 
 are the solutions to the equations
\begin{align}
  \Phi(z_h+b) - \Phi(-z_h) &  = 1-u   \label{eqn:zlu}\\  
   \Phi(-z_l)  -  \Phi(z_l+b) &  = 1-u. \label{eqn:zhu}
\end{align}
Since the $p$-value is monotonically increasing on $z< -b/2$
and decreasing on $z>-b/2$, the $p$-value will be less than $u$ if 
$z< z_l$ or $z> z_h$.
Now let $U = p(Z,b)$ where  $Z\sim N(\theta,1)$. 
Then the probability that $U\leq u$ is the probability
that $Z<z_l$ or $Z>z_h$, which is
\[
F_U(u) = \Pr( U\leq u) = \Phi( z_l-\theta ) + 1- \Phi(z_h-\theta). 
\]
Note that $z_l$ and $z_h$ are functions of
$u$, and  can be computed with a zero-finding algorithm.
Taking the derivative of $F_U$ gives the density of $U$, 
\[
  f_U(u) = \frac{ d F_U(u)}{du} = \phi(z_l-\theta)  \frac{d z_l}{du}   -
    \phi(z_h-\theta ) \frac{d z_h}{du}, 
\]
where $\phi$ is the standard normal density. 
The derivatives on the right side of the equation  can be found by implicit differentiation of
(\ref{eqn:zlu}) and (\ref{eqn:zhu}), giving
\[
f_U(u) = \frac{\phi(z_l-\theta)}{\phi(-z_l)+\phi(z_l+b)} +
         \frac{\phi(z_h-\theta)}{\phi(-z_h)+\phi(z_h+b)} . 
\]
Functions for computing the CDF and density of $U=p(Z,b)$ are available 
in the supplementary material for this article.

Figure  \ref{fig:densities} displays the 
density of 
$p(Z,b)$ for various values of $b$ and under various 
alternative values of $\theta$. 
If $b$ and 
$\theta$ are 
both positive (or are both negative) then the FAB $p$-value density 
is more concentrated  on small values than the density of the UMPU $p$-value. 
However, if there is a mismatch between $\theta$ and $b$ then the FAB $p$-value is expected to be larger. 

\subsection{FAB $p$-values from $t$-statistics} 
In most applications the value of $\sigma^2$ is 
unknown and must be estimated from the data. 
For example, suppose $Y_1,\ldots, Y_n \sim $ i.i.d.\ $N(\theta,\sigma^2)$ with $\sigma^2$ unknown and let $S$ be the sample standard deviation. 
If $\theta=0$ and $n$ is large then $T=\sqrt{n}\bar Y/S$ will be approximately standard normal and $1-|\Phi(T+b) - \Phi(T)|$ will be approximately uniform. 
If the sample size is insufficient to 
justify this approximation, then a $t_{n-1}$ null distribution should be used 
for construction of the $p$-value. 
Like the standard normal null distribution, 
$t$ distributions are continuous and symmetric about zero, 
and a $p$-value function may be defined as in the normal case.  
In fact, such a function may be similarly defined for any continuous null 
distribution that is symmetric about zero. 

\begin{thm}
\label{thm:gfabp}
Let $T\sim F$, where $F$ is the CDF
of any continuous distribution that is symmetric about zero.
Then for any $b\in \mathbb R$, 
$\Pr(  1-|F(T+ b) - F(-T)| < u ) = u $. 
\end{thm}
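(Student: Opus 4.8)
The plan is to mirror the direct argument given for Theorem~\ref{thm:uniformity}, replacing the two special properties of $\Phi$ used there—namely $\Phi(-x)=1-\Phi(x)$ and the symmetric-decreasing shape of the $p$-value function—by the corresponding properties of an arbitrary continuous symmetric $F$. The first fact to record is that symmetry about zero together with continuity gives $F(-x)=1-F(x)$ for every $x$: since $T\stackrel{d}{=}-T$ we have $F(-x)=\Pr(T\le -x)=\Pr(T\ge x)=1-F(x)$, the final equality using that $F$ has no atoms.

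Write $g(t)=1-|F(t+b)-F(-t)|$ and, applying the symmetry identity, $h(t):=F(t+b)-F(-t)=F(t+b)+F(t)-1$. Two structural facts drive the proof. First, $h$ is nondecreasing, being a sum of the nondecreasing functions $F(\cdot+b)$ and $F(\cdot)$ minus a constant. Second, $h$ is antisymmetric about $t=-b/2$: substituting $t=-b/2\pm s$ and applying $F(-x)=1-F(x)$ gives $h(-b/2+s)=-h(-b/2-s)$, and in particular $h(-b/2)=F(b/2)+F(-b/2)-1=0$. Hence $h(t)\ge 0$ for $t\ge -b/2$ and $h(t)\le 0$ for $t\le -b/2$, so $|h(t)|$ is nondecreasing as $t$ moves away from $-b/2$ in either direction; equivalently, $g$ attains its maximum value $1$ at $t=-b/2$ and decreases symmetrically. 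This is precisely the shape property that underlies the proof of Theorem~\ref{thm:uniformity}.

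With this shape in hand I would repeat that computation. Fix $u\in(0,1)$. Because $g$ is symmetric and decreasing away from $-b/2$, the event $\{g(T)<u\}$ equals the event that $T$ lies outside the interval $(-b/2-c,\,-b/2+c)$, where $c>0$ is chosen so that $g(-b/2+c)=u$. Expanding this defining equation and using the symmetry identity gives
\[
u=1-\bigl(F(b/2+c)-F(b/2-c)\bigr)=F(-b/2-c)+\bigl[1-F(-b/2+c)\bigr],
\]
and the right-hand side is exactly $\Pr(T\le -b/2-c)+\Pr(T>-b/2+c)$, the probability that $T$ falls outside $(-b/2-c,\,-b/2+c)$. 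Therefore $\Pr(g(T)<u)=u$. Equivalently, and more conceptually, one verifies (splitting into the cases $t+b/2>0$ and $t+b/2<0$) that $g(t)=\Pr(|T+b/2|>|t+b/2|)$, so that $g(T)$ is the survival function of the continuous random variable $|T+b/2|$ evaluated at itself, which is automatically uniform on $(0,1)$.

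I expect the only step needing genuine care to be the verification that $g$ has the claimed symmetric-decreasing shape for a general $F$; once the antisymmetry and monotonicity of $h$ are established, the remainder is the same bookkeeping as in Theorem~\ref{thm:uniformity}, with continuity of $F$ invoked exactly where it was before to exclude atoms at the endpoints $-b/2\pm c$. No property of $F$ beyond continuity and symmetry about zero enters, which is what makes the generalization go through.
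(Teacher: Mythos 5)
Your proof is correct and takes essentially the same route as the paper, whose entire proof of this theorem is the remark that the arguments of Theorems~\ref{thm:fabpvalue} and~\ref{thm:uniformity} go through with $\Phi$ replaced by $F$; you have simply made explicit the two facts that license that substitution, namely $F(-x)=1-F(x)$ and the symmetric-decreasing shape of $t\mapsto 1-|F(t+b)-F(-t)|$ about $-b/2$. Your closing observation that the $p$-value equals the survival function of $|T+b/2|$ evaluated at itself likewise mirrors the paper's first justification in Theorem~\ref{thm:uniformity}.
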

\noindent The proof is the same as that of Theorems \ref{thm:fabpvalue} and 
  \ref{thm:uniformity} but with 
$\Phi$ replaced by $F$.  

Suppose that we have a normal estimator 
$\hat\theta$ of $\theta$, so that 
$\hat\theta \sim N( \theta, \sigma^2)$, where $\theta$ and $\sigma^2$ are both unknown.  Suppose additionally that we have an 
estimator $\hat\sigma^2$ of $\sigma^2$ 
for which $\nu \hat\sigma^2/\sigma^2 \sim \chi^2_\nu$, and 
that is independent of $\hat\theta$. 
Then $\hat\theta/\hat\sigma \sim t_{\nu}$ if $\theta=0$. 
Considering the form of the 
FAB $p$-value in the known-variance case, it seems reasonable to 
use as a $p$-value 
\begin{equation}  
 1- |F_\nu( \hat\theta/\hat\sigma + 2 \mu  \tilde \sigma/\tau^2 ) - F_\nu(-\hat\theta/\hat\sigma  )|, 
\label{eqn:ptapprox} 
\end{equation}
where $F_\nu$ is the CDF of the $t_\nu$ distribution, 
$\mu$ and $\tau^2$ are the mean and variance of a distribution 
describing the indirect information about $\theta$, and $\tilde \sigma$ is a prior guess as to the value of $\sigma$. 
This $p$-value is uniformly distributed under the null distribution 
$\hat\theta/\hat\sigma \sim t_\nu$ regardless of the values of 
$\mu,\tau^2$ and $\tilde \sigma$. 
Furthermore, this $p$-value is easy to calculate and is numerically stable
insofar as the CDF of the $t$-distribution may be calculated.  

However, this $p$-value function does not correspond exactly to a 
Bayes-optimal test. 
Letting $T= \hat\theta/\hat\sigma$, a
Bayes-optimal level-$\alpha$ test of 
$H:\theta=0$ is one that rejects when the ratio  $p_\pi( T)/p_0(T)$ exceeds 
its $1-\alpha$ quantile under the null distribution, where 
$p_0$ is the $t_\nu$ density function and $p_\pi$ is the marginal 
density of $T$ under a prior distribution for 
$\theta$ and $\sigma^2$. To see the difficulty in 
using a formally Bayes-optimal test to construct a $p$-value function 
in this case, 
consider a $N(\mu,\tau^2)$ prior distribution for $\theta$ and 
an 
arbitrary 
prior distribution for 
$\sigma^2$. Marginally over the prior distribution for $\theta$ 
but conditional on $\sigma^2$, 
we have $\hat\theta \sim N(\mu ,\sigma^2+\tau^2)$, and the resulting 
marginal distribution for $T=\hat\theta/\hat\sigma$ is that $c T$ has a non-central $t$-distribution with $\nu$ degrees of freedom and  
noncentrality 
parameter $\mu/\sqrt{\sigma^2+\tau^2}$, 
where $c = \sqrt{\sigma^2}/\sqrt{\sigma^2+\tau^2}$.
Let $p_\pi(t|\sigma^2)$ denote the corresponding density for $T$. The marginal density $p_\pi(t)$, needed to construct the Bayes-optimal test,  
is obtained by integrating 
$p_\pi(t|\sigma^2)$ with respect to the prior distribution for $\sigma^2$. 
To obtain the corresponding $p$-value, we would first have to 
compute $p_\pi(t)/p_0(t)$ where $t$ is the observed value of 
$\hat\theta/\hat\sigma$, and then compute the probability of 
obtaining a more extreme value under the null hypothesis, that is, 
$\Pr( p_\pi(T)/p_0(T) > p_\pi(t)/p_0(t) )$ where 
$T\sim t_{\nu}$. 
Calculation of such a $p$-value would be numerically intensive:
It would require two nested levels of numerical integration - one to compute 
$p_\pi(t)$ and one to compute the CDF of $p_\pi(t)/p_0(t)$. Additionally, 
the integrals require evaluation of the non-central $t$-distribution, 
numerical approximation of which is often poor. 
For these reasons, 
the $p$-value function (\ref{eqn:ptapprox})  is recommended unless there is a strong desire 
for a more formal incorporation of specific prior information about $\sigma^2$.

\section{Adaptive $p$-values from independent normal data}

\subsection{FAB $p$-values via indirect information} 

In this section the utility of the 
FAB $p$-value is broadened  by the  development of  statistical methods 
with which indirect information may be summarized and incorporated into 
the $p$-value. 
Generalizing the approach taken by \citet{yu_hoff_2018} and 
\citet{burris_hoff_2018} in the context of confidence interval construction, 
the proposed method is  
to use a statistical model to relate a parameter $\theta$ of interest 
to other data that are available, in which case the indirect information itself 
may be stochastic. This does not affect the uniformity of the FAB $p$-value, 
as long as the summary of the indirect information is statistically independent 
of the direct information. 

\begin{cor}
\label{cor:fabp}
Let $Z$ and $b$ be independent random variables with $Z\sim N(0,1)$. Then $\Pr( 1- |\Phi( Z +  b) - \Phi(-Z) | < u ) =u$.
\end{cor}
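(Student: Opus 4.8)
The plan is to reduce the random-$b$ statement to the fixed-$b$ result already established in Theorem~\ref{thm:uniformity} by conditioning on $b$. The essential observation is that $Z$ and $b$ are independent, so conditioning on the event $\{b=\beta\}$ does not alter the distribution of $Z$: given $b=\beta$, we still have $Z\sim N(0,1)$. This is exactly the setting of Theorem~\ref{thm:uniformity} with the deterministic constant $\beta$ playing the role of $b$.

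First I would write, for each fixed $\beta\in\mathbb R$,
\[
\Pr\!\left(\,1-|\Phi(Z+\beta)-\Phi(-Z)|\le u \;\middle|\; b=\beta\,\right)=u,
\]
which holds by Theorem~\ref{thm:uniformity} applied conditionally, using independence to justify that the conditional law of $Z$ is $N(0,1)$. Because the right-hand side equals $u$ for \emph{every} value $\beta$, the conditional probability is a constant (in $\beta$), and so taking the expectation over the distribution of $b$ via the tower property gives
\[
\Pr\!\left(\,1-|\Phi(Z+b)-\Phi(-Z)|\le u\,\right)
=\Exp{\,\Pr\!\left(\,1-|\Phi(Z+b)-\Phi(-Z)|\le u \mid b\,\right)}
=\Exp{u}=u.
\]

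The one point requiring a remark is the discrepancy between the $\le$ appearing in Theorem~\ref{thm:uniformity} and the strict $<$ in the corollary's statement. Since for each fixed $\beta$ the $p$-value $1-|\Phi(Z+\beta)-\Phi(-Z)|$ is uniformly distributed on $(0,1)$ and hence has a continuous distribution, the event that the $p$-value equals $u$ exactly has conditional probability zero; therefore $\Pr(\cdot<u\mid b=\beta)=\Pr(\cdot\le u\mid b=\beta)=u$, and the strict inequality in the unconditional statement follows after integrating out $b$.

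I do not anticipate a genuine obstacle: all the analytic work is contained in Theorem~\ref{thm:uniformity}, and the corollary is a direct application of conditional independence together with the law of total probability. The only care needed is to invoke the conditional distribution of $Z$ correctly, which the independence hypothesis supplies immediately.
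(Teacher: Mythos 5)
Your argument is correct and is essentially the paper's own proof: condition on $b$, use independence to apply Theorem~\ref{thm:uniformity} for each fixed value, then integrate out $b$. Your extra remark reconciling the strict inequality with the non-strict one via continuity of the $p$-value's distribution is a small refinement the paper omits, but it does not change the approach.
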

\begin{proof}
By independence, the conditional probability given $b$ that  
the $p$-value is less that $u$ is equal to $u$ for each $b$, and 
so it equal to  $u$ marginally over values of $b$ as well. 
\end{proof}

Specifically, we develop adaptive FAB $p$-values in multiparameter 
settings, where indirect information about a given parameter is derived 
from the direct information about other parameters. To start, 
suppose a data vector $\bs Y$ with uncorrelated elements will be sampled from
a multivariate normal population, so that the \emph{sampling model}
for the data is
\begin{equation}
\bs Y\sim N_{p}( \bs\theta , \bs \Sigma),
\label{eqn:ismod}
\end{equation}
where $\bs\theta \in \mathbb R^p$ is unknown and for now we assume $\bs\Sigma = \text{diag}(\sigma_1^2,\ldots, \sigma_p^2)$ is 
known.
For each $j=1,\ldots, p$ we will construct a FAB $p$-value $p(Y_j/\sigma_j,b_j)=  1-| \Phi(Y_j/\sigma_j+ b_j) - \Phi(-Y_j/\sigma_j)|$, 
where $Y_j$ is the $j$th element of $\bs Y$ and $b_j$ is independent 
of $Y_j$, so that $p(Y_j/\sigma_j,b_j)$
is uniformly distributed under the null hypothesis 
$H_j: \theta_j = 0$. 

Indirect information about $\theta_j$ that is independent of $Y_j$  may be derived from
the other entries of $\bs Y$. To facilitate generalization to the correlated 
case considered in the next section, we write these elements 
as $\bl G_j^\top \bs Y$, where $\bl G_j$ is the $p\times (p-1)$ matrix obtained 
by deleting the $j$th column from the $p\times p$ identity matrix. 
Then $\bl G_j^\top \bs Y \sim N_{p-1}( \bl G_j^\top\bs\theta , \bl G_j^\top \bs\Sigma \bl G_j )$,
and $\bl G_j^\top \bs Y$ is statistically independent of $Y_j$. 
By Corollary \ref{cor:fabp}, a  $p$-value $p(Y_j/\sigma_j,b_j)$
is uniformly distributed under $H_j$ for any statistic $b_j$ that is
a function of $\bl G_j^\top \bs Y$. 

While $\bl G_j^\top \bs Y$ doesn't contain direct information about $\theta_j$,
it does contain information about $\bl G_j^\top \bs \theta$, and so $\bl G_j^\top \bs Y$
can be used to provide information about $\theta_j$ if there are relationships
among the entries of $\bs\theta$. Consider a Gaussian \emph{linking model}
to describe these relationships, of the form
\begin{equation} 
\bs \theta\sim N_p(\bs\mu, \bs\Psi). 
\label{eqn:pmod}
\end{equation}
Straightforward linear algebra shows that
the conditional distribution of $\bs\theta$ given $\bl G_j^\top \bs Y$ is $N_p(\bl m,\bl V)$
where
\begin{align}
\bl V & = [ \bs\Psi^{-1} +\bl  G_j ( \bl G_j^\top \bs\Sigma\bl  G_j )^{-1}\bl  G_j^\top ]^{-1} \label{eqn:cvar} \\
\bl m & =\bl  V  [ \bs \Psi^{-1}\bs\mu + \bl G_j ( \bl G_j^\top \bs\Sigma \bl G_j )^{-1} \bl G_j^\top \bs Y ] \label{eqn:cexp} . 
\end{align}
In some cases 
it may be easier to compute these as 
$\bl V  = \bs \Psi -\bs \Psi\bl  G_j [\bl  G_j^\top(\bs\Psi+\bs\Sigma)\bl G_j ]^{-1} \bl G_j^\top\bs \Psi$ and 
$\bl m  =\bs \mu +\bs \Psi \bl G_j [ \bl G_j^\top(\bs\Psi+\bs\Sigma)\bl G_j ]^{-1} \bl G_j^\top (\bs Y -\bs \mu ).$

To the extent that the linking model is believed, 
the conditional distribution of $\theta_j$ given 
 $\bl G_j^\top \bs Y$ 
is therefore 
$\theta_j| \bl G_j^\top \bs Y \sim N(m_j,v_{j,j})$, 
where $m_j$ and $v_{j,j}$ denote elements $j$ and $\{j,j\}$ of 
$\bl m$ and $\bl V$ respectively. 
This distribution quantifies the indirect information about
$\theta_j$: It
is the
distribution of $\theta_j$ conditional on the data $\bl G_j^\top \bs Y$ 
that is independent of $Y_j$ under the sampling model (\ref{eqn:ismod}), 
which in turn provides indirect 
information  about $\theta_j$ via the 
relationships among the elements of $\bs\theta$
based on the
linking model (\ref{eqn:pmod}). 

Even if the linking model is not believed, it may still be used as a way 
to construct a data-adaptive $p$-value that will be uniformly distributed 
if $H_j$ is true, as long as $m_j$ and $v_{j,j}$ are statistically 
independent of $Y_j$ under the sampling model (\ref{eqn:ismod}). 
The performance of such a $p$-value under alternatives to $H_j$ 
will depend on the extent to which the linking model 
is a good representation of the heterogeneity of the elements 
of $\bs\theta$. 
The accuracy of this representation may be improved by estimating 
$\bs\mu$ and $\bs\Psi$ from the data itself. 
In the applications we consider, $\bs\mu$ and $\bs\Psi$ will depend 
on some lower dimensional parameter, say $\bs\gamma$.  
This parameter may be estimated using hierarchical modeling methods, 
as the sampling model for the indirect information, $\bl G_j^\top \bs Y\sim N_{p-1}( \bl G_j^\top \bs\theta, \bl G_j^\top\bs\Sigma \bl G_j )$,  together with the induced linking model $\bl G_j^\top \bs\theta \sim N_{p-1}( \bl G_j^\top \bs\mu_{\bs\gamma},\bl  G_j^\top \bs\Psi_{\bs\gamma} \bl G_j )$  constitute a Gaussian mixed effects model. 
For example, the maximum likelihood estimator of $\bs\gamma$ 
based on data independent of $Y_j$ 
is the maximizer 
in $\bs\gamma$ of the density of the 
marginal distribution of $\bl G_j^\top \bs Y$, 
$\bl G_j^\top \bs Y \sim N_{p-1}(\bl  G_j^\top \bs \mu_{\bs\gamma},\bl  G_j^\top [\bs \Sigma + \bs \Psi_{\bs\gamma} ] \bl G_j)$.

Now we are in a position to apply the results of the previous
section. By Theorem \ref{thm:fabtest} the Bayes-optimal
test of $H_j$ given the indirect information $\bl G_j^\top \bs Y$ rejects when
$|Y_j/\sigma_j + m_j \sigma_{j}/v_{j,j}|$ is large,
and by Theorem \ref{thm:fabpvalue}
the Bayes-optimal $p$-value is given by
$p(Y_j/\sigma_j,b_j)$
where $b_j = 2 m_j \sigma_{j}/v_{j,j}$. 
However, as $m_j$ and $v_{j,j}$ depend on the unknown 
parameter $\bs\gamma$ via  $\bs\mu_{\bs\gamma}$ and $\bs\Psi_{\bs\gamma}$, we instead use 
estimates $\tilde m_j$ and $\tilde v_{j,j}$ based on 
$\bs \mu_{\tilde {\bs \gamma}}$ and $\bs\Psi_{\tilde {\bs \gamma}}$, where 
$\tilde{\bs \gamma}$ is obtained from $\bl G_j^\top \bs Y$. Since $\bl G_j^\top \bs Y$ 
is independent of 
$Y_j$,   $\tilde b_j = 2 \tilde m_j \sigma_j/\tilde v_{j,j}$ is 
also independent of $Y_j$ and 
so $1-|\Phi(Y_j/\sigma_j+ \tilde b_j ) -\Phi(-Y_j/\sigma_j) |$ is uniformly distributed 
if $\theta_j=0$ by Corollary \ref{cor:fabp}. 

If $\sigma_j^2$ is not known
but a high-precision estimator is available, then 
the value of the estimator can be plugged into the formula for the FAB
 $p$-value. 
Alternatively, if an estimator $\hat\sigma^2_j$ is available 
that is independent of $Y_j$ and for which 
$\nu \hat\sigma^2_j/\sigma_j^2\sim \chi^2_\nu$, then by  
Theorem \ref{thm:gfabp} 
the $p$-value $1- |F_\nu( Y_j/\hat\sigma_j + \tilde b_j ) - F_{\nu}(-Y_j/\hat\sigma_j) |$ is uniformly 
distributed if $\theta_j=0$ and $\tilde b_j$ is independent of 
 $Y_j/\hat \sigma_j$, 
where $F_{\nu}$ is the CDF of the $t$-distribution with $\nu$ degrees of freedom. 
Based on the discussion at the end of Section 2, we use 
$\tilde b_j = 2 \tilde m_j \tilde \sigma_{j}/\tilde v_{j,j}$, 
where $\tilde m_j$ and $\tilde v_{j,j}$ are estimated 
from $\bl G_j^\top \bs Y$ as described in the previous paragraph. 
Note that 
an additional estimator $\tilde\sigma_j$ of $\sigma_j$ is also required, 
which must be independent 
of $Y_j/\hat\sigma_j$ for exact uniformity of the $p$-value 
to be maintained when $\theta_j=0$. Availability of such an estimator 
will depend on the particular application. For example, if estimates 
of the $\sigma_k$'s for which $k\neq j$ are also available, then $\tilde \sigma_j^2$ 
may be based on them.

\subsection{Example: Small area inference with the Fay-Herriot model} 
The 2002 Educational Longitudinal Study gathered data on a sample of
U.S.\ high schools and their students. 
From each
 participating school, a small sample of 10th grade
students
were selected and given a survey and a standardized reading exam. 
For $p=684$ schools the sample size was 2 students or more, 
and among these the median school-level sample size was 21 and the maximum 
sample size was 50. 
In this section we use adaptive 
FAB $p$-values to evaluate for each school 
the evidence that their school-specific 
mean score on the reading exam deviates from a 
particular national average value.

Let $\bar Y_j$ and $\hat\sigma^2_j$ be the sample mean and
variance of the $n_j$ reading test scores of students sampled from school $j$.
A Gaussian within-school sampling model implies that 
the vector $\bar{\bs Y}$ of school-specific sample means is normally 
distributed, with 
\begin{equation} \bar {\bs Y} \sim N_p(\bs\theta, \text{diag}(\sigma^2_1/n_1,\ldots, 
\sigma^2_p/n_p)),
\label{eqn:sasampling} 
\end{equation} where $\bs\theta \in \mathbb R^p$ is the vector 
of ``true'' school-specific means, meaning that $\theta_j$ is 
the average exam score 
had all students in school $j$ participated in the study. 
The sampling model also implies that 
$(n_j-1)\hat\sigma^2_j/\sigma_j^2 \sim \chi^2_{n_j-1}$, independently for $j=1,\ldots, p$ and independent of $\bar {\bs Y}$.   

In this numerical illustration we construct $p$-values for evaluating 
the hypothesis that 
$\theta_j$ is equal to 50, 
a score 
that corresponds to the intended national average on the exam.
Assuming the Gaussian within-school sampling model, the statistic
$T_j= \sqrt{n_j}(\bar Y_j - 50)/\hat\sigma_j$ has a $t_{n_j-1}$ distribution under the null hypothesis $H_j:\theta_j=50$.
Letting $\nu_j=n_j-1$, the $p$-value
$1- |F_{\nu_j}(T_j + \tilde b_j ) - F_{\nu_j}(-T_j)|$
will be uniformly distributed under the null hypothesis as long as 
$\tilde b_j$ is independent of $T_j$.  
The usual UMPU $p$-value based on $T_j$ is obtained by setting 
$\tilde b_j=0$. An  adaptive 
FAB $p$-value for each school $j$ can be constructed by setting  
$\tilde b_j$ to 
a value that utilizes indirect information from schools other than 
$j$. We do this via a Gaussian linking model for the $\theta_j$'s 
of the form 
\begin{equation} \bs\theta\sim N_p (\bl X\bs\beta, \tau^2\bl  I),  
\label{eqn:salinking} 
\end{equation}
where 
$\bl X$ is a $p\times q$ matrix of observed school-level characteristics, 
and $\bs\beta$ and $\tau^2$ are unknown. 
For these data, the characteristics $\bl x_j$ of school $j$ includes a numeric 
measure of the number of students in school $j$ who participated in a 
free lunch program, the total enrollment of school $j$,  and seven indicator variables that encode three categorical 
variables: school type (public, Catholic, or other private), 
region of the country (West, Midwest, South and East), and 
urbanicity (urban, suburban, rural). 
Together, a  sampling model such as  (\ref{eqn:sasampling}) and
a linking model such as  (\ref{eqn:salinking}) are sometimes referred to as
a Fay-Herriot model  \citep{fay_herriot_1979}, a linear mixed-effects model that is often used in the small area estimation literature to
obtain stable estimates
for each of many groups or ``areas''  by
sharing information across groups \citep{ghosh_rao_1994}.
Here  we are only using the linking model 
to obtain a value of $\tilde b_j$ with which the FAB $p$-value 
for group $j$ is constructed. 
Validity of the linking model is not necessary for the 
FAB $p$-value to be uniformly distributed under the null hypothesis $H_j$.

\begin{figure}
\begin{center}
\begin{knitrout}\footnotesize
\definecolor{shadecolor}{rgb}{0.969, 0.969, 0.969}\color{fgcolor}

{\centering \includegraphics[width=6.25in]{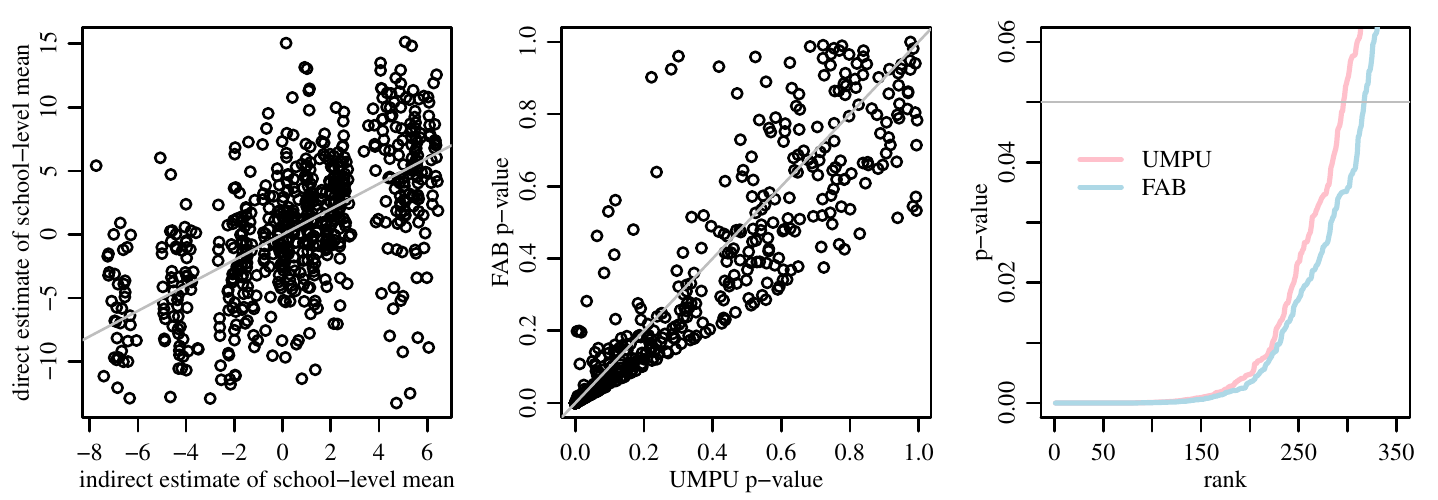} 

}

\end{knitrout}
\caption{Results from the ELS analysis. The left 
 panel plots direct estimates of school-level means versus the indirect 
 estimates.  
The middle and right panels plot the FAB and UMPU $p$-values. }
\label{fig:elsFH} 
\end{center}
\end{figure}

Based on the discussion in Section 3.1, we set
$\tilde b_j=2(\tilde {\bs\beta}^\top \bl x_j - 50 ) (\tilde \sigma/\sqrt{n_j})/\tilde \tau^2$,
where $\tilde {\bs \beta}$, $\tilde \tau^2$ and $\tilde \sigma^2$ are the
MLEs obtained by fitting the the Fay-Herriot hierarchical model
(with a common within-school variance) to the data from schools
other than $j$.
Since $\tilde b_j$ is statistically independent of the sample from
school $j$, the FAB 
$p$-value $1- |F_{\nu_j}(T_j+ \tilde b_j) - F_{\nu_j}(-T_j)|$
is  uniformly distributed if $\theta_j=50$,
even if the linking model is incorrect.
Figure \ref{fig:elsFH} 
describes some aspects of this data analysis. The left-most panel 
plots the 
direct estimate $\bar Y_j$ of each $\theta_j$ versus the 
indirect estimate $\tilde m_j$ obtained from 
the Fay-Herriot model.
The positive correlation between these two estimates suggests that 
the indirect information can be of use for school-level inference. 
The middle panel of the figure plots the FAB $p$-values versus the 
UMPU $p$-values corresponding to UMPU $t$-tests applied to each 
school individually. 
The FAB $p$-value is smaller than the UMPU $p$-value for 529  
out of 
684  schools (77\%).  
Of more interest might be 
the small $p$-values, displayed in the right-most panel of the figure.  
There were 316 FAB $p$-values and 295 UMPU $p$-values less than 0.05, 
for a difference of 21 schools. 

\subsection{Example: Spatial linking models}
In this subsection we perform a small numerical simulation study that is meant to mimic the search 
for signals along a one-dimensional lattice, such as a chromosome. 
Let $\bs Y \sim N_p( \bs \theta ,\bl  I)$
and suppose the true $\bs\theta$  is a realization of a 
discrete Markov chain
taking values in $\{ -1,0,1\}$ and with transition probability matrix
\[
P= \begin{pmatrix} .975 & .025 & .000 \\
                   .010 & .990 & .010 \\
                   .000 & .025 & .975 \end{pmatrix}. 
\]
Figure \ref{fig:indmm} displays one realization of such a process for $p=1000$.

\begin{figure}[ht]
\begin{knitrout}\footnotesize
\definecolor{shadecolor}{rgb}{0.969, 0.969, 0.969}\color{fgcolor}

{\centering \includegraphics[width=6in]{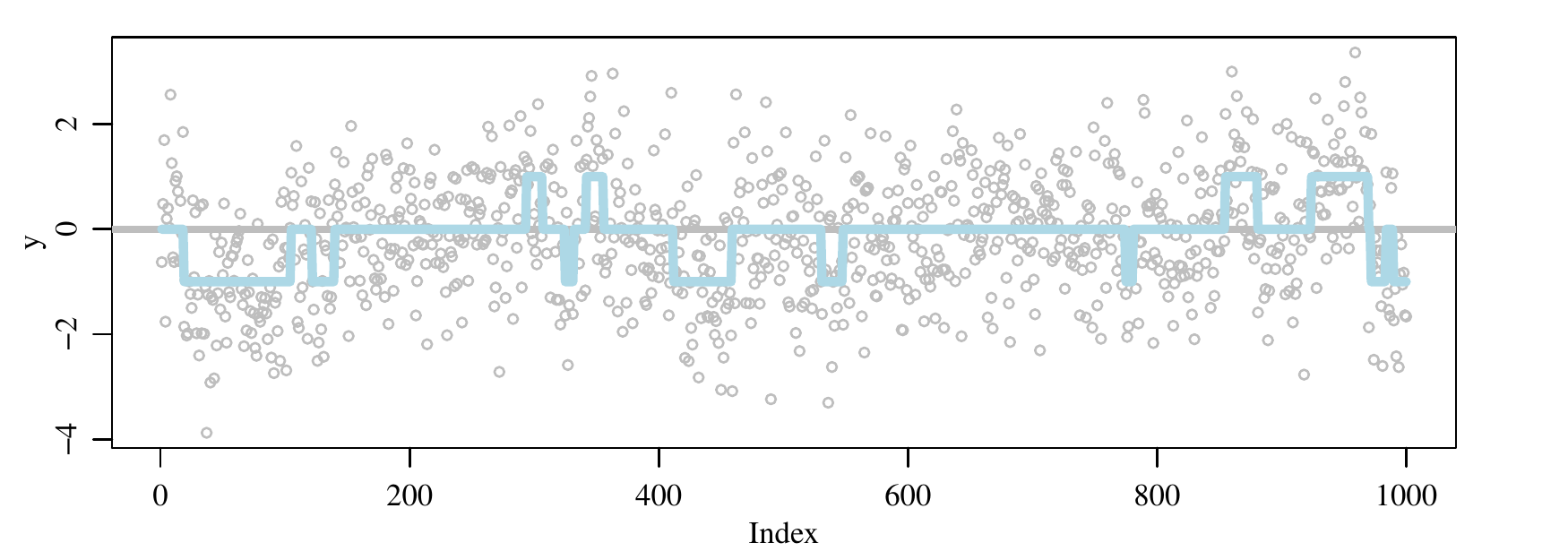} 

}

\end{knitrout}
\caption{A realization of $\bs\theta$ (blue line) and $\bs Y$ (gray dots) 
         from the hidden Markov model. } 
\label{fig:indmm}
\end{figure}

For each $j\in \{1,\ldots, p\}$
we construct a FAB $p$-value from $\bs Y$  to 
evaluate $H_j:\theta_j=0$ 
using the (incorrect) linking model $\bs\theta\sim N_p(\mu \bl 1,\bs\Psi)$ where 
$\mu$ is a scalar, $\bl 1$ is a $p$-dimensional vector of ones, and 
$\bs\Psi$ is the covariance matrix of a conditional autoregressive process, 
so that 
the conditional distribution of $\theta_j$ given the other elements of 
$\bs\theta$ is 
$N( \beta_0 + \beta_1 ( \theta_{j-1} + \theta_{j+1}) , \tau^2)$ 
for some values of $\beta_0$, $\beta_1$ and $\tau^2$. 
To construct a FAB $p$-value for $H_j$, the procedure is to 
\begin{enumerate}
\item obtain estimates $\tilde {\bs\gamma}=(\tilde \mu,\tilde \beta_0,\tilde \beta_1,\tilde \tau^2)$  of the parameters $\bs\gamma=(\mu, \beta_0,\beta_1,\tau^2)$ 
of the  Gaussian autoregressive linking model, using 
 $\bl G_j^\top\bs Y$ as data; 
\item compute plug-in estimates $\tilde m_j, \tilde v_{j,j}$ of the conditional mean and variance $m_j, v_{j,j}$ of $\theta_j$ given $\bl G_j^\top \bs Y$, using $\tilde {\bs \gamma}$ and Equations \ref{eqn:cvar} and 
\ref{eqn:cexp}; 
\item compute the FAB $p$-value 
$p(Y_j,\tilde b_j) = 1-| \Phi( Y_j + \tilde b_j) - \Phi(-Y_j)|$, where $\tilde b_j = 2 \tilde m_j/\tilde v_{j,j}$. 
\end{enumerate}
Further computational details 
are available in the supplementary material for this article.

Computing the FAB $p$-value for each of the 1000 hypotheses is somewhat time consuming, as 
the autoregressive model in step 1 is fit separately for each $j\in \{1,\ldots, 1000\}$, 
each time without $Y_j$
to ensure that $\tilde b_j$ is independent of $Y_j$. 
One faster alternative is to 
break the vector $\bs Y$ into contiguous subvectors, and 
then construct the FAB $p$-values 
in one subvector using linking model parameters estimated from the 
remainder of the 
subvector. Such 
an approach will maintain the uniformity of the FAB $p$-values under 
$H_j$, but will not make optimal use of the information in 
$\bl G_j^\top \bs Y$. 
Another option would be to
fit the autoregressive model once using the entire $\bs Y$-vector, and use the resulting 
linking model parameter estimate $\tilde {\bs\gamma}$
 to construct each $p$-value. 
However, since in 
this case $Y_j$ is used to obtain the common $\tilde {\bs\gamma}$, the value of 
$\tilde b_j$ will depend slightly on $Y_j$,
violating the 
sufficient condition for $p(Y_j,\tilde b_j)$ to be uniformly 
distributed under $H_j$. 
However, as this 
linking model  has a small number of parameters relative to $p$, 
we might expect the influence of $Y_j$ on $\tilde b_j$ to be slight.
We investigate this claim numerically in the simulation study  that follows.

One-hundred $\bs\theta$-vectors were simulated from the discrete Markov chain described above, and from each a single observed data vector $\bs Y\sim N_p(\bs\theta, \bl I)$ was simulated. For each $\bs Y$-vector and each index $j$, 
three $p$-values 
for evaluating $H_j:\theta_j=0$ were computed:  the adaptive FAB $p$-value described in steps 1, 2 and 3 above; an ``approximate'' adaptive 
FAB $p$-value that 
estimates the parameters of the linking model only once for each simulated $\bs Y$-vector;  and the usual $p$-value $p(Y_j,0)$ based on the UMPU test of 
$H_j:\theta_j=0$. 
For each simulated data vector and each type of $p$-value, 
we computed the number of 
discoveries (rejected null hypotheses)
and the false discovery proportion (FDP)
using the Benjiamini-Hochberg procedure \citep{benjamini_hochberg_1995} with a false discovery rate (FDR)
controlled to be less than $0.2$.

We first compare each of the three $p$-value 
procedures in terms of FDR control. 
On average across datasets, the number of ``true nulls'' ($\theta_j$'s equal to zero) was about 574 out of 1000.
Therefore, the Benjamini-Hochberg (BH)
 procedure using the UMPU $p$-values should 
attain an actual FDR of about $0.574\times 0.2 \approx 0.115$. 
Modulo Monte Carlo error, this is what was observed - the FDP
based on the UMPU $p$-values was  0.106
on average across the 100 datasets.
The average FDP using the
FAB and approximate FAB procedures were both about  0.108, 
very similar to that of the UMPU procedure and well below the target
FDR. We note that 
control using the 
Benjamini-Hochberg (BH) procedure is guaranteed for the 
UMPU $p$-values since they are statistically independent. 
In contrast, the FAB $p$-values are dependent 
since, for example,  the $p$-values for $\theta_j$ and $\theta_{j+1}$ 
are both functions of $Y_j$ and $Y_{j+1}$ (the indirect information 
for $Y_j$ includes $Y_{j+1}$ and vice versa). This leads to 
positive dependence among spatially proximal FAB $p$-values.  
However, several theoretical results have been obtained showing that 
for some types of positive  dependence
 the BH procedure maintains the target FDR 
\citep{benjamini_yekutieli_2001,sarkar_2002}, or does so asymptotically 
in $p$ \citep{clarke_hall_2009}. While the conditions for these results 
are not met exactly for the example scenario presented here, 
these theoretical results and the empirical results for this example
(that the actual FDP in the example is less than half the target FDR)
suggest that a bigger issue is the 
conservatism of the BH procedure.

\begin{figure}[ht]
\begin{knitrout}\footnotesize
\definecolor{shadecolor}{rgb}{0.969, 0.969, 0.969}\color{fgcolor}

{\centering \includegraphics[width=5.75in]{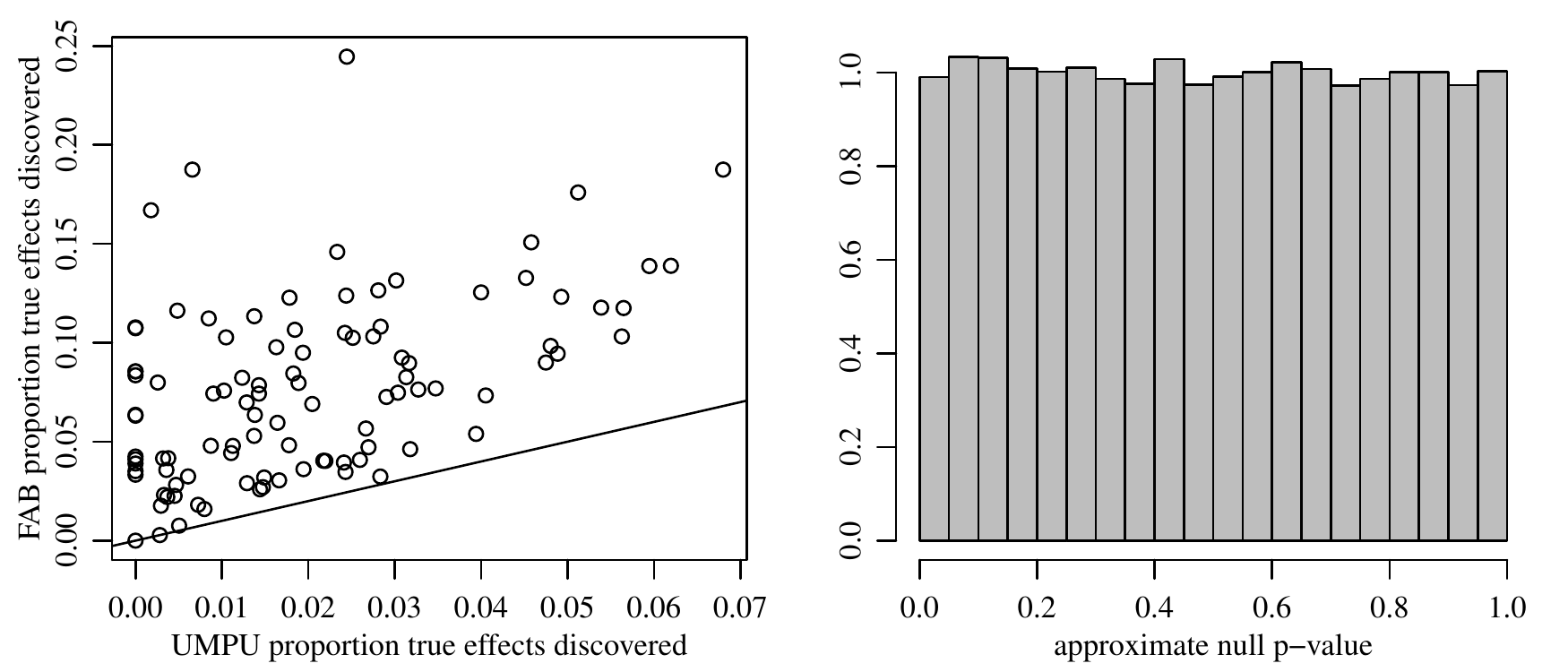} 

}

\end{knitrout}
\caption{Some results from the simulation study. The panel on the left plots 
 the fraction of true positives discovered by the FAB procedure relative 
to the UMPU procedure for each of the 1000 simulated datasets. The panel 
on the right displays the distribution of the approximate FAB $p$-values 
corresponding to true nulls. }  
\label{fig:spatialresults}
\end{figure}

Some of the results relating to power are displayed in
Figure \ref{fig:spatialresults}. The first panel plots the
fraction of true positives discovered using the exact FAB procedure,
versus the fraction discovered by the UMPU procedure.
The proportion discovered by the FAB approach was as high or higher 
for all 1000
simulated datasets. On average across datasets, the mean proportion
of true positives that were discovered was 0.02 
for the UMPU procedure
and 0.077 for the FAB procedure, almost a four-fold improvement.

Finally, we note that the approximate FAB procedure
performed nearly identically to the exact FAB procedure. Recall that
the only difference between the two procedures is that for each $j$,
the
parameter $\bs\gamma$ of the linking model was estimated
using all elements of $\bs Y$ for the approximate procedure, and all
but the $j$th element of $\bs Y$ for the exact procedure. Since $(p-1)/p$
is quite close to one in this case, the two estimators are nearly the
same and we would therefore expect
the $p$-values to be very similar. In particular, since the
FAB $p$-values are uniformly distributed for $j$'s such
that $\theta_j=0$, we expect the corresponding approximate 
FAB $p$-values to 
be nearly uniformly distributed as well.
This is confirmed in the second panel of Figure \ref{fig:spatialresults},
which plots a histogram of the approximate FAB $p$-values for the 
true null hypotheses, across all
iterations of the simulation study.

\section{Adaptive $p$-values from dependent data} 

\subsection{Data with known covariance}
In many applications the elements of the data vector 
$\bs Y$ will be correlated. Examples include the case that 
$\bs Y$ is a vector of estimates of linear regression coefficients from 
a non-orthogonal design matrix, and situations where 
the elements of $\bs Y$ correspond to 
measurements that are spatially or temporally related. 
In such cases, FAB $p$-values that are marginally uniform under their 
null hypotheses may be obtained as in the previous section
by generalizing the form of the indirect information used to 
construct the $p$-value for each hypothesis. 

Specifically, consider constructing a $p$-value for 
$H_j:\theta_j=0$ based on the model 
$\bs Y\sim N_p(\bs\theta, \bs\Sigma)$ where $\bs\Sigma$ is not necessarily diagonal.  In this subsection we consider the case that $\bs\Sigma$ is known - 
cases where it is unknown are considered in the following two subsections. 
As before, we refer to  $Y_j/\sigma_j$ as the direct information 
for $\theta_j$. 
Indirect information about $\theta_j$ that is independent of $Y_j$  may be based upon an
appropriate linear transformation of $\bs Y$.
Let 
$\bl G_j$ be any $p\times (p-1)$ matrix
whose columns form a basis for the null space of the
$j$th column of $\bs\Sigma$.
Then  $\bl G_j^\top \bs Y \sim N_{p-1}( \bl G_j^\top\bs\theta , \bl G_j^\top \bs\Sigma \bl G_j )$,
and  is statistically independent of $Y_j$.
By Corollary \ref{cor:fabp}, a  $p$-value $p(Y_j/\sigma_j,\tilde b_j) =  1- | \Phi(Y_j/\sigma_j+ \tilde b_j ) -\Phi(-Y_j/\sigma_j) |$
is uniformly distributed under $H_j$ for any statistic $\tilde b_j$ that is
a function of $\bl G_j^\top \bs Y$.

As before, while $\bl G_j^\top \bs Y$ doesn't provide direct information about $\theta_j$, 
it does provide information about $\bl G_j^\top \bs\theta$ 
and so it 
can be used to provide information about $\theta_j$ if there are relationships
among the entries of $\bs\theta$. 
If such relationships can be encoded in a linking model 
$\bs\theta \sim N_p(\bs\mu, \bs\Psi)$, then 
a FAB $p$-value for $H_j$ may be constructed exactly as in the previous section, modulo the more general form of the indirect information 
$\bl G_j^\top \bs Y$ where $\bl G_j$ depends on $\bs\Sigma$.   
In particular, 
the conditional expectation and variance of $\theta_j$ given 
$\bl G_j^\top \bs Y$ under the linking model may still be 
obtained from Equations \ref{eqn:cvar}  and \ref{eqn:cexp}. 
Additionally, if reasonable values of  $\bs \mu$  and $\bs\Sigma$ are not known 
in advance, then they may be estimated from the indirect 
information $\bl G_j^\top \bs Y$ and the resulting FAB $p$-value will still 
be uniform under $H_j$. 

In practice it is 
rare that the covariance of $\bs Y$ will be completely known. 
However, if it is known sufficiently to decorrelate $\bs Y$ 
then FAB $p$-values may be obtained that are exactly uniform 
under their null hypotheses. Otherwise, if $\bs\Sigma$ may be consistently 
estimated, then FAB  $p$-values may be constructed that 
are asymptotically uniform. In this case, we can also replace the 
normality assumption for $\bs Y$ with asymptotic normality. 
We illustrate procedures 
for these two scenarios in the following 
subsections.  

\subsection{Data with known correlation} 
Here we show how to construct FAB $p$-values for regression coefficients of a normal linear regression model, a case where the correlations among the direct estimates are known. Construction of
FAB $p$-values in other cases of known correlation is similar.

We use essentially the same method of data splitting as was used 
in \citet{hoff_yu_2019} in the context of confidence interval construction. 
Suppose we wish to construct $p$-values for the elements of $\bs\beta\in \mathbb R^p$ 
in the linear regression model $\bs Y \sim N_n(\bl X\bs\beta,\sigma^2 I)$, 
where $\bl X\in \mathbb R^{n\times p}$ is an observed design matrix 
and $\bs\beta\in \mathbb R^p$ and $\sigma^2\in \mathbb R^+$ are unknown. Our sampling model is based 
on the probability distribution of the ordinary least-squares (OLS) regression estimator, 
$\hat{\bs \beta} \sim N_p(\bs\beta ,\sigma^2  (\bl X^\top\bl X)^{-1} )$, 
with $\hat{\bs\beta}$ being independent of the usual unbiased estimator $\hat\sigma^2$ of $\sigma^2$, for which 
$\nu \hat \sigma^2/\sigma^2 \sim \chi^2_\nu$, where $\nu=n-p$. 
The direct estimate for a particular coefficient $\beta_j$ is 
$\hat\beta_j \sim N(\beta_j, h_{j,j} \sigma^2 )$ where 
$h_{j,j} = (\bl X^\top \bl X)^{-1}_{j,j}$, and  so 
$T_j=\hat \beta_j/[h_{j,j}^{1/2} \hat\sigma]\sim t_{\nu}$ 
under $H_j:\beta_j=0$.  
To find the indirect information, 
let $\bl G_j$ be a $p\times (p-1)$ matrix 
whose columns span the null space of the $j$th column of 
$(\bl X^\top \bl X)^{-1}$, so that  $\bl G_j^\top \hat {\bs \beta}$ 
is independent of 
$\hat\beta_j$. 
Letting $F_\nu$ denote the CDF of a $t$-distribution with $\nu$ degrees of 
freedom, 
the distribution of $1 - |F_\nu(T_j + \tilde b_j) - F_\nu(-T_j)|$ will then 
be uniformly distributed under $H_j:\beta_j=0$ for any scalar 
function $\tilde b_j$ of $\bl G_j^\top \hat {\bs\beta}$.

A value of $\tilde b_j$ may be obtained from a linking model for $\bs\beta$. 
The marginal distribution of $\bl G_j^\top \hat{\bs \beta}$ under the linking model 
$\bs\beta\sim N(\bs\mu,\bs\Psi)$ is
$\bl G_j^\top \hat{\bs \beta} \sim N_{p-1}( \bl G_j^\top \bs\mu, \bl G_j^\top [ \bs\Psi + \sigma^2 (\bl X^\top \bl X)^{-1} ] \bl G_j)$. 
If $\bs \mu$ and $\bl \Psi$ are sufficiently structured 
then marginal maximum likelihood estimates 
$\tilde {\bs \mu}, \tilde {\bs \Psi}, \tilde \sigma^2$ 
may be obtained from this 
marginal  model for $\bl G_j^\top \hat{\bs \beta}$. 
These estimates may
then 
be used to obtain estimates $\tilde m_j$ and $\tilde v_{j,j}$ of 
the conditional mean $m_j$ and variance $v_{j,j}$ of $\beta_j$ given the indirect information. 
Since $\tilde m_j, \tilde v_{j,j}$ and $\tilde \sigma^2$ 
are obtained from $\bl G_j^\top\hat{ \bs\beta}$, they are independent of 
$\hat \beta_j$ (and $\hat\sigma^2$), and so the 
FAB $p$-value $1- |F(T_j+ 2 \tilde m_{j} \tilde\sigma /\tilde v_{j,j}  ) - F(-T_j)| $ 
is uniformly distributed under $H_j$. 

This methodology can also be applied in cases where there is only a 
linking model for a subset of parameters. For example, consider the 
linear regression model 
\[
  \bs Y \sim N_n( \bl W \bs\alpha +\bl  X \bs\beta , \sigma^2 \bl I ) 
\]
where $\bl W$ and $\bl X$ are observed design matrices. If interest is primarily in 
$\bs\beta$ and a reasonable linking model relating $\bs\alpha$ to $\bs\beta$ is
not available, 
then FAB $p$-values for $\bs\beta$ alone may be obtained as just 
described, except noting that 
$\hat{\bs\beta} \sim N_p(\bs\beta ,\sigma^2  \bs\Omega  )$ where 
$\bs\Omega$ is the appropriate submatrix of $ ( [ \bl W \bl X]^\top  [ \bl W \bl X] )^{-1}$. 
Letting $\bl G_j$ be an orthogonal basis for the null space of 
the $j$th column of $\bs\Omega$, we can obtain an estimated conditional 
distribution for $\beta_j$ using a linking model $\bs\beta\sim N_p(\bs \mu ,\bs\Psi)$ and the indirect information $\bl G_j^\top \hat{\bs \beta}$. 
Specifically, $\bl G_j^\top \hat{\bs \beta}$ is independent of $\hat\beta_j$, 
and so the indirect information about $\beta_j$ is then given by 
$\bl G_j^\top \hat{\bs \beta} \sim N_{p-1} ( \bl G_j^\top \bs \mu , \bl G_j^\top [  \bs\Psi + \sigma^2 \bs\Omega]  \bl G_j)$.

As a numerical illustration we perform an analysis of another aspect of the 
2002 Educational Longitudinal Study  dataset. 
Let $Y_i$ be the test score of student $i$, 
$s_i$ be a numerical measure of their socioeconomic status (SES)
and let $\bl v_i$ be a three-dimensional vector indicating the 
sex, parents' education, and status as a native English speaker of 
student $i$. 
Also, let 
$\bl g_i\in \{ 0,1\}^p$ be the binary vector indicating in which of 
$p =  684$ schools student $i$ is enrolled. Let $\bl w_i = (\bl g_i, \bl v_i) \in 
\mathbb R^{p+3}$ and let $\bl x_i = s_i \bl g_i  \in \mathbb R^{p}$. The linear 
regression model 
\[
  Y_i = \bs\alpha^\top \bl w_i + \bs\beta^\top \bl x_i+ \epsilon_i
\]
can be used to evaluate the evidence for each school that 
SES is related to test score, controlling for effects 
of $\bl v_i$, 
by obtaining $p$-values for each element of $\bs\beta$.  
Note that the entries of $\hat{\bs\beta}$, the OLS estimator of $\bs\beta$, 
are correlated with each other because of the 
presence of the covariate $\bl v_i$. 

We compute indirect information for testing each $\beta_j$ using 
$\bl G_j^\top \hat {\bs \beta}$ and 
an 
exchangeable linking model $\bs\beta\sim N_p(\mu \bl 1 ,  \tau^2 \bl I)$. 
The model for the indirect information is thus 
$\bl G_j^\top \hat{\bs\beta}  \sim N_{p-1}( \mu \bl G_j^\top \bl 1 , \sigma^2 \bl G_j^\top \bs\Omega \bl G_j +  \tau^2 \bl I)$. 
From this model 
we obtain marginal maximum likelihood estimates $(\tilde \mu, \tilde \tau^2,\tilde \sigma^2)$ of 
 $(\mu, \tau^2,\sigma^2)$ that are statistically independent of $T_j$. 
We then compute the 
 FAB $p$-value
$1- | F( T_j + 2 \tilde \mu_j \tilde \sigma/\tilde \tau ) - F(-T_j)|$, which is
uniformly distributed if $\beta_j=0$.

Some results of this analysis are plotted in Figure \ref{fig:elsINT}. 
The first panel plots the direct OLS estimate $\hat\beta_j$  for 
each school $j$
versus its estimated  precision $(\hat \sigma^2 \omega_{j,j})^{-1/2}$ 
(the reciprocal of the standard error). 
While  the  variability of the $\hat\beta_j$'s around their average value 
(given by a horizontal black line) is large, 
the variability is highest 
among estimates corresponding to schools with low sample sizes 
(and hence low precision). This suggests that much of the observed 
variability 
among the elements of $\hat{\bs\beta}$ is due to within-school 
sampling variability, and that the heterogeneity of the actual
$\beta_j$'s is much lower. 
An informal assessment of this is made with the two gray curves in the plot, 
which are at the across-school average of the $\hat\beta_j$'s plus and minus 
1.96 times the standard error (the reciprocal of the horizontal coordinate of the plot). Indeed, most of the OLS estimates fall within the standard errors of the overall average. 
Nevertheless, 
an  $F$-test of the  global null hypotheses of 
no across-school variation in $\beta_j$'s has a $p$-value of 
less that $10^{-4}$, suggesting non-zero  across-school variation. 

The middle panel of  Figure \ref{fig:elsINT} plots the UMPU $p$-values 
versus the FAB $p$-values for evaluating $H_j:\beta_j=0$ for each school $j$. 
The unfamiliar pattern is due to the fact that, 
while there is evidence of across-school variability, 
the variation is centered around a positive 
value that is relatively far from zero. 
As a result, the FAB $p$-values have adaptively 
become nearly one-sided, with values close to half those of the 
UMPU $p$-values for the $\hat\beta_j$'s that are positive. 
This leads to many more ``small'' FAB $p$-values than small UMPU $p$-values, 
as shown in the right-most plot in the figure. 
For example, there are 245 FAB $p$-values less than 0.05, but only 188
UMPU $p$-values below this level. 

There may be some concern that since the FAB $p$-values in this case are
nearly the same as one-sided $p$-values, their power to 
detect evidence that a given parameter 
is less than zero is drastically reduced. While this is true, there is 
not much evidence in the data that any of the $\beta_j$'s are actually 
less than zero. 
For example, there was only one  $p$-value smaller than 0.05 that 
corresponded to a negative value of $\hat\beta_j$ 
(a $p$-value of 0.04). 
The data suggest that the $\beta_j$'s have a small amount of variation 
around a positive value, a data feature to which the FAB $p$-value procedure has adapted. 

\begin{figure}
\begin{center}
\begin{knitrout}\footnotesize
\definecolor{shadecolor}{rgb}{0.969, 0.969, 0.969}\color{fgcolor}

{\centering \includegraphics[width=6.5in]{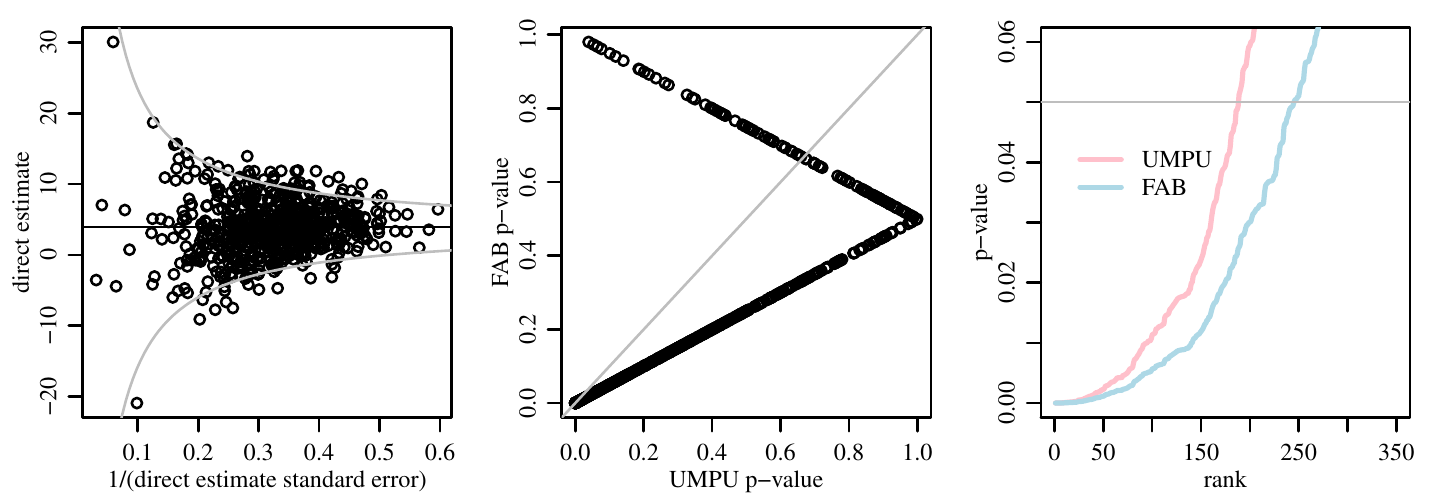} 

}

\end{knitrout}
\caption{Evaluation of school-specific associations between SES and standardized test performance.  The left panel plots OLS estimates versus their 
precision. The middle and right-most panels plot UMPU and FAB $p$-values. 
}
\label{fig:elsINT}
\end{center}
\end{figure}

\subsection{Asymptotically uniform null $p$-values}
In the preceding linear regression example, adaptive FAB 
$p$-values with exactly uniform null distributions are available 
because the covariance  of the observed data vector  is known 
up to a scalar. For each element of the data vector, this 
permits construction of an independent vector 
of indirect information. 
In cases where  the covariance 
is not sufficiently known to permit this decorrelation, 
an independent vector of indirect information 
may not be available. However, if a consistent estimator of the covariance 
is available then it is possible to construct approximate 
adaptive FAB $p$-values with null distributions that are asymptotically 
uniform.
Asymptotically uniform null $p$-values are also available in cases 
where the direct information is only asymptotically normally distributed. 
This may be the case if, for example, the direct estimates consist
 of regression coefficient estimates from a generalized linear model.

Consider the general case where 
the data available include a consistent and asymptotically 
normal estimator $\hat{\bs \theta}$ of $\bs\theta$,  so that 
$\sqrt{n}(\hat{\bs \theta}-\bs\theta) \stackrel{d}{\rightarrow} N_p(\bl 0,\bs\Sigma)$ as $n\rightarrow \infty$, 
and a consistent estimator $\hat{\bs\Sigma}$ of $\bs\Sigma$. 
The standard Wald  $p$-value for $\theta_j$
that is asymptotically uniform under $H_j:\theta_j=0$  can be written as
$1-| \Phi( \hat Z_j ) - \Phi(-\hat Z_j)|$, where 
$\hat Z_j = \sqrt{n} \hat \theta_j/\hat \sigma_j$.  
The proposed  approximate FAB $p$-value for $\hat\theta_j$ 
is  $1-| \Phi( \hat Z_j + \tilde b_j ) - \Phi(-\hat Z_j)|$, where
$\tilde b_j$ is obtained from 
indirect information about $\theta_j$ that is asymptotically independent of 
the direct information $\hat \theta_j$. 
This indirect information about $\theta_j$ can be constructed as follows:
Define the function $\bs\Gamma:\mathbb R^p \rightarrow \mathbb R^{p\times(p-1)}$
so that $\bs\Gamma(\bl s)$ is a $p\times (p-1)$ matrix whose columns form an orthonormal basis for the null space of $\bl s$. Specifically, define $\bs\Gamma(\bl s)$
so that the basis is obtained by
Gram-Schmidt orthogonalization of the
collection of vectors in $\mathbb R^p$  consisting of 
$\bl s$ and the standard basis vectors $\{ \bl e_k :  k \in \{1,\ldots, p\} \setminus \{j\} \}$. This construction implies that  
$\bs\Gamma$ is a continuous function of $\bl s$. 
Now let $\hat{\bl G}_j = \bs\Gamma( \hat{\bs\sigma}_j)$, 
where $\hat{\bs\sigma}_j$ is the $j$th column of $\hat{\bs\Sigma}$.  
By the continuous mapping theorem, $\hat{\bl G}_j$ is a consistent 
estimator of $\bl G_j = \bs\Gamma(\bs\sigma_j)$, where 
$\bs\sigma_j$ is the $j$th column of the true covariance matrix $\bs\Sigma$. 
The indirect information about $\theta_j$ can be derived from 
$\hat{\bl G}_j^\top \hat{\bs \theta}$. This random vector will not be exactly 
independent of the direct estimate $\hat \theta_j$   because 
$\hat{\bl G}_j$ is not exactly equal to $\bl G_j$ and $\hat{\bs\theta}$  might not be  exactly normally distributed, but $\hat{\bl G}_j^\top \hat{\bs \theta}$ should 
be approximately independent of $\hat \theta_j$ for large $n$, since 
$\hat{\bl G}_j$ is converging to $\bl G_j$, 
$\bl G_j^\top \hat {\bs \theta}$ is asymptotically uncorrelated with $\hat\theta_j$ and both are asymptotically normal. 
The following result shows that indeed 
$\hat{\bl G}_j^\top \hat {\bs \theta}$ is uncorrelated 
with and independent of $\hat \theta_j$ in an asymptotic sense.

\begin{thm}
Let $\hat{\bs\Sigma}\stackrel{p}{\rightarrow} \bs\Sigma$ and 
$\sqrt{n}(\hat{\bs\theta}-\bs\theta) \stackrel{d}{\rightarrow} 
         \bs E$, where $\bs E\sim N_p(\bl 0,\bs\Sigma)$. 
Then as $n\rightarrow \infty$, 
\begin{enumerate}
\item ${\rm Cor}[\hat{\bl G}_j^\top \hat{\bs \theta} , \hat \theta_j] \rightarrow 
\bl 0$; 
\item 
$\Pr( \{ \sqrt{n} \hat{\bl G}_j^\top (\hat{\bs\theta}-\bs\theta)\in A \}
  \ \cap \  
   \{    \sqrt{n}(\hat\theta_j - \theta_j) \in B \}) 
          \rightarrow  \Pr( \bl G^\top \bs E \in A ) \times  
     \Pr( E_j \in B)$,
\end{enumerate}
where 
$\bl G_j = \bs\Gamma(\bs\sigma_j)$, $\hat{\bl G}_j = \bs\Gamma(\hat {\bs\sigma}_j)$, and 
$A\subset \mathbb R^{n-1}$ and $B\subset \mathbb R$ are measurable sets.  
\label{thm:asyind}
\end{thm}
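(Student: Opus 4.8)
The plan is to reduce both claims to one structural fact: by the Gram--Schmidt construction the columns of $\bs\Gamma(\bl s)$ are orthonormal and orthogonal to $\bl s$, so that $\bl G_j^\top\bs\sigma_j=\bl 0$ and, exactly at every $n$, $\hat{\bl G}_j^\top\hat{\bs\sigma}_j=\bl 0$. First I would establish consistency of the estimated basis: since $\hat{\bs\Sigma}\stackrel{p}{\rightarrow}\bs\Sigma$ forces $\hat{\bs\sigma}_j\stackrel{p}{\rightarrow}\bs\sigma_j$ and $\bs\Gamma$ is continuous, the continuous mapping theorem gives $\hat{\bl G}_j=\bs\Gamma(\hat{\bs\sigma}_j)\stackrel{p}{\rightarrow}\bs\Gamma(\bs\sigma_j)=\bl G_j$.

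For claim 2 the point is that $\hat{\bl G}_j$ converges to a \emph{constant}. Writing $X_n=\sqrt n(\hat{\bs\theta}-\bs\theta)\stackrel{d}{\rightarrow}\bs E$, Slutsky's theorem yields the joint convergence $(\hat{\bl G}_j,X_n)\stackrel{d}{\rightarrow}(\bl G_j,\bs E)$, and applying the continuous map $(\bl M,\bl v)\mapsto(\bl M^\top\bl v,v_j)$ together with the continuous mapping theorem gives
\[
\bigl(\sqrt n\,\hat{\bl G}_j^\top(\hat{\bs\theta}-\bs\theta),\ \sqrt n(\hat\theta_j-\theta_j)\bigr)\ \stackrel{d}{\rightarrow}\ (\bl G_j^\top\bs E,\ E_j).
\]
The limit is a linear image of the Gaussian $\bs E$, hence jointly Gaussian, with cross-covariance $\bl G_j^\top\bs\Sigma\bl e_j=\bl G_j^\top\bs\sigma_j=\bl 0$; joint Gaussianity with zero correlation gives independence, so the limit law is the product of the laws of $\bl G_j^\top\bs E$ and $E_j$. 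The stated factorization then follows from the Portmanteau theorem, with the understanding that $A$ and $B$ are continuity sets of the (absolutely continuous) Gaussian marginals, so that $\partial(A\times B)$ is null under the limit and $\Pr$ of the joint event converges to $\Pr(\bl G_j^\top\bs E\in A)\,\Pr(E_j\in B)$.

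For claim 1 I would instead work with second moments and exploit the \emph{exact} identity $\hat{\bl G}_j^\top\hat{\bs\sigma}_j=\bl 0$. Using the law of total covariance and conditioning on $\hat{\bs\Sigma}$ (so that $\hat{\bl G}_j$ is fixed), the dominant term is $\hat{\bl G}_j^\top$ times the $j$th column of $\mathrm{Cov}[\hat{\bs\theta}\mid\hat{\bs\Sigma}]$. To leading order this column is $\bs\sigma_j/n$, and because $\hat{\bl G}_j^\top\hat{\bs\sigma}_j=\bl 0$ exactly, the surviving piece is $\hat{\bl G}_j^\top(\bs\sigma_j-\hat{\bs\sigma}_j)/n=o_p(1/n)$. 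Dividing by the product of the two marginal standard deviations, each of order $n^{-1/2}$, then yields a correlation tending to $\bl 0$.

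The main obstacle is claim 1, not claim 2. Because $\hat{\bl G}_j$ is random, $\hat{\bl G}_j^\top\hat{\bs\theta}$ differs from $\sqrt n\,\hat{\bl G}_j^\top(\hat{\bs\theta}-\bs\theta)$ by the \emph{random} quantity $\sqrt n\,\hat{\bl G}_j^\top\bs\theta$, so correlation is not invariant under the centering used for claim 2 and the distributional limit does not by itself pin down the finite-sample correlation. Making the leading-order expansion rigorous requires (i) a uniform-integrability or explicit moment condition so that the second moments of $\sqrt n(\hat{\bs\theta}-\bs\theta)$ converge to those of $\bs E$, (ii) enough rate control on $\hat{\bs\sigma}_j-\bs\sigma_j$ to dominate the residual term, and (iii) checking that the conditional-mean contribution in the total-covariance decomposition is of smaller order. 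These are the points I would treat most carefully.
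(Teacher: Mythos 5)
Your argument for part 2 is essentially the paper's own proof. The paper stacks the $j$th standard basis vector onto $\hat{\bl G}_j$ to form a $p\times p$ matrix $\hat{\bl H}_j$, notes $\hat{\bl H}_j\stackrel{p}{\rightarrow}\bl H_j$ by continuity of $\bs\Gamma$, and applies Slutsky once to get $\hat{\bl H}_j^\top\sqrt{n}(\hat{\bs\theta}-\bs\theta)\stackrel{d}{\rightarrow}(\bl G_j^\top\bs E,\ E_j)$ with the two blocks independent; you do the same thing componentwise via $(\hat{\bl G}_j,X_n)\stackrel{d}{\rightarrow}(\bl G_j,\bs E)$ and the continuous mapping theorem. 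Your extra care about why $\bl G_j^\top\bs E$ and $E_j$ are independent (joint Gaussianity plus $\bl G_j^\top\bs\sigma_j=\bl 0$) and about restricting to continuity sets for the Portmanteau step is a genuine improvement in rigor: the paper states the factorization for arbitrary measurable $A$ and $B$, which weak convergence alone does not deliver, though absolute continuity of the limit makes the caveat harmless in practice.

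On part 1 you diverge from the paper, and you have put your finger on a real gap rather than created one. The paper disposes of the correlation claim with the same sentence that handles part 2 (``both results then follow by Slutsky's theorem''), but convergence in distribution does not imply convergence of correlations; that requires uniform integrability of the relevant second moments, which the theorem's hypotheses do not supply. Your observation that $\hat{\bl G}_j^\top\hat{\bs\theta}$ and $\sqrt{n}\,\hat{\bl G}_j^\top(\hat{\bs\theta}-\bs\theta)$ differ by the \emph{random} term $\sqrt{n}\,\hat{\bl G}_j^\top\bs\theta$, so that the correlation is not recovered from the centered distributional limit, is also correct. Your law-of-total-covariance sketch, exploiting the exact identity $\hat{\bl G}_j^\top\hat{\bs\sigma}_j=\bl 0$, is a sensible route to an honest proof of part 1, but as you yourself note it needs the moment and rate conditions (i)--(iii) to close; neither your write-up nor the paper actually supplies them, so part 1 should be read as holding for the limiting distribution (which is what part 2 establishes) rather than as a proven statement about finite-sample correlations under the stated hypotheses alone.
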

\begin{proof} 
 Let $\hat {\bl  H}_j$ 
be the $p\times p$ matrix obtained by binding the $j$th standard basis vector 
to $\hat{\bl G}_j$, and define $\bl H_j$ analogously. 
Since  $\bs\Gamma$ is continuous we have that $\hat{\bl G}_j \stackrel{p}{\rightarrow} 
 \bl G_j$
and also $\hat{\bl H}_j \stackrel{p}{\rightarrow} 
 \bl H_j$. 
Both results then follow by Slutsky's theorem, since 
$  \hat{\bl H}_j^\top \sqrt{n}(\hat{\bs\theta} - \bs\theta) \stackrel{d}{\rightarrow}
   {\bl H}_j^\top \bs E = ( \bl G_j^\top \bs E \ E_j)$, 
and $\bl G_j^\top \bs E$ is independent of $E_j$. 
\end{proof}

Now we construct a class of $p$-value functions that make use of the 
indirect information and are asymptotically uniform under each 
null hypothesis. Consider the $p$-value function  
$1-| \Phi (\hat Z_j + \hat {\sigma}_j f(\hat{\bl G}_j^\top \hat{\bs \theta}, 
   \hat{\bl G}_j)/\sqrt{n}) - \Phi(-\hat Z_j) |$ 
where 
  $f:\mathbb R^{p-1}  \times \mathbb R^{p\times (p-1)} \rightarrow\mathbb  R$
is a continuous function. For example, $f$ could be a function 
that, when input $(\bl G^\top_j \bs\theta,\bl G_j)$, returns 
$\Exp{ \theta_j | \bl G^\top_j \bs\theta }/\Var{\theta_j | \bl G_j^\top \bs\theta }$ based on a linking model $\bl G_j^\top \bs\theta\sim N_{p-1}(\bl G_j^\top\bs\theta, \bl G_j^\top \Psi \bl G_j)$. Since neither $\bl G_j$ nor $\bs\theta$ are known, 
we use plug-in estimates $\hat{\bl G}_j^\top \hat{\bs \theta}$ and $\hat{\bl G}_j$. 

An asymptotic analysis of this FAB $p$-value where 
$\bs\theta$ is fixed for all sample sizes is not particularly interesting. 
In this case, $f(\hat{\bl G}_j^\top \hat{\bs \theta}, \hat{\bl G}_j)$
converges to $f(\bl G_j^\top \bs\theta, \bl G_j)$ by the continuous mapping 
theorem, and so 
 the FAB adjustment 
  $\tilde b_j =  \hat \sigma_j f(\hat{\bl G}_j^\top \hat{\bs \theta}, 
   \hat{\bl G}_j)/\sqrt{n}$ will converge to zero and so too
the  difference between the FAB  and UMPU $p$-values.
More interesting is to consider a sequence of $\bs\theta$-values that decrease in magnitude with 
$n$, 
say $\bs\theta = \bs \theta_0/\sqrt{n}$ when the sample size is $n$.  
Now recall that 
$f(\bl G_j^\top \bs\theta ,\bl G_j)$ 
returns a plug-in estimate of  $\Exp{ \theta_j | \bl G^\top_j \bs\theta }/\Var{\theta_j | \bl G_j^\top \bs\theta }$ under 
the linking model $\bl G_j^\top \bs\theta\sim N_{p-1}(\bl G_j^\top\bs\theta, \bl G_j^\top \Psi \bl G_j)$. 
It is natural to use a function $f$ that is scale equivariant, which implies that 
$f$ satisfies
$f(c \bl G^\top \bs\theta ,\bl G)= 
     f(\bl G^\top \bs\theta ,\bl G)/c$ for  positive scalars $c$. 
In this case, we have $f( \hat{\bl G}_j^\top \hat{\bs \theta}, \hat{\bl G}_j)/\sqrt{n} =
    f( \hat{\bl G}_j^\top (\sqrt{n} \hat{\bs \theta}), \hat{\bl G}_j) $. 
 Now $\sqrt{n} \hat{\bs \theta}$ is approximately distributed as  
$N_p(\bs\theta_0 ,\bs \Sigma)$, 
while $\hat{\bl G}_j$ converges in probability to $\bl G_j$, so multiplication of 
 $\sqrt{n} \hat{\bs \theta}$ by $\hat {\bl G}_j$ gives an 
indirect information vector 
that is asymptotically independent of the direct estimate  $\hat\theta_j$, and 
so the resulting $p$-value will be asymptotically uniformly distributed if 
$\theta_{0j}=0$. 
This is  summarized
 as follows: 

\begin{figure}[ht]
\begin{knitrout}\footnotesize
\definecolor{shadecolor}{rgb}{0.969, 0.969, 0.969}\color{fgcolor}

{\centering \includegraphics[width=6in]{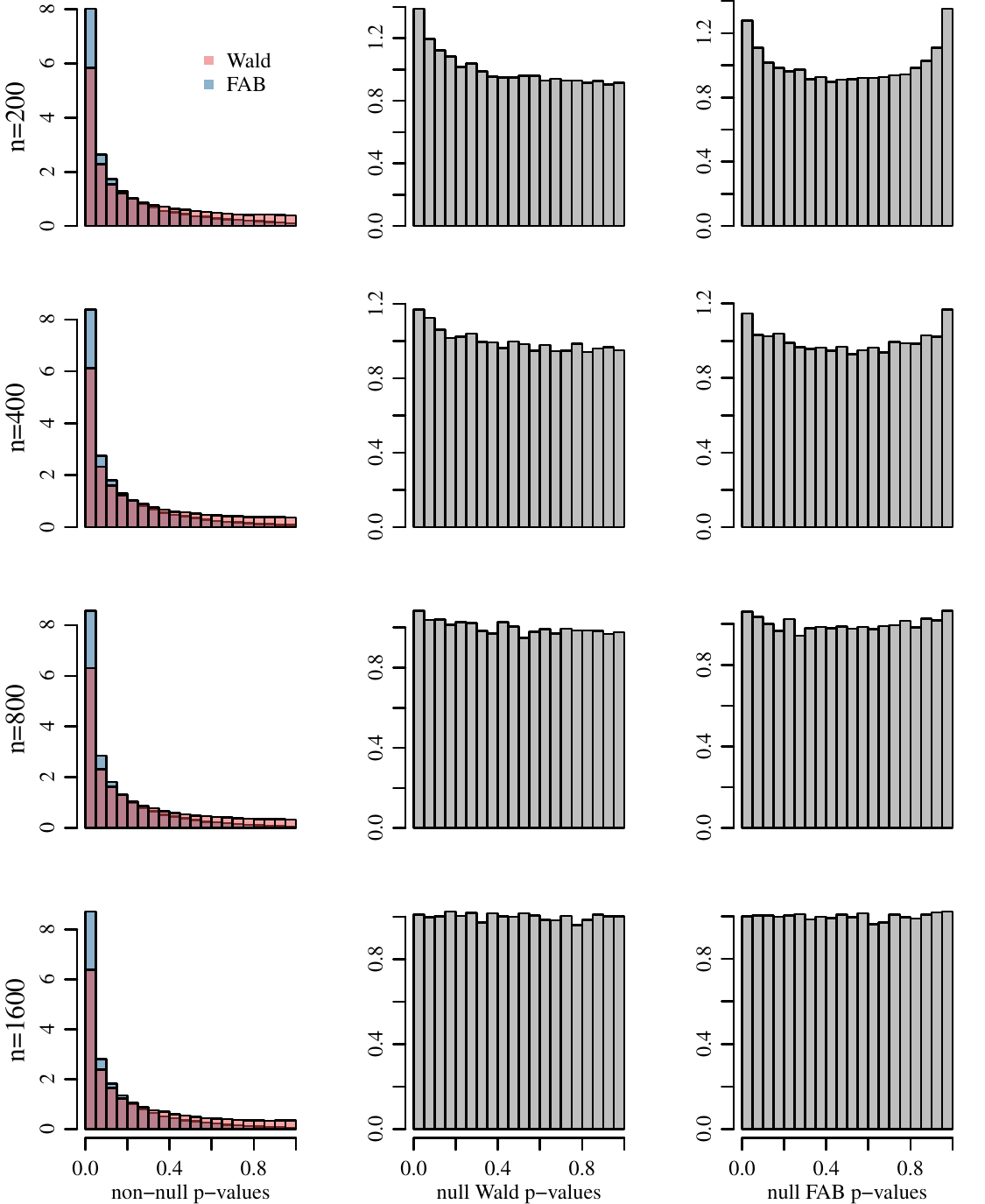} 

}

\end{knitrout}
\caption{Distributions of approximate $p$-values from the simulation study. 
 The first column displays distributions of Wald and FAB $p$-values for 
non-null parameters. The second and third columns display distributions of 
  $p$-values for null parameters. 
 }
\label{fig:exasym}
\end{figure}

\begin{thm}
Let $\hat{\bs\Sigma}\stackrel{p}{\rightarrow} \bs\Sigma$ and
$\hat{\bs \theta} \stackrel{d}{=} \bs Y/\sqrt{n} + o_p(1/\sqrt{n})$,
where $\bs Y\sim N_p(\bs\theta_0,\bs\Sigma)$.
Then as $n\rightarrow \infty$,
\begin{equation}
 1-| \Phi (\hat Z_j + \hat {\sigma}_j f(\hat{\bl G}_j^\top \hat{\bs \theta}, 
   \hat{\bl G}_j)/\sqrt{n}) - \Phi(-\hat Z_j) | \stackrel{d}{\rightarrow}
1- |\Phi(Z_j + \sigma_j f({\bl G}_j^\top \bs Y, \bl G_j))  
    - \Phi(-Z_j) |,   
\label{eqn:plimit}
\end{equation}
where $\hat Z_j = \sqrt{n} \hat\theta_j/\hat{\sigma}_j$,
$Z_j = Y_j/\sigma_j$, and
$f:\mathbb R^{p-1}  \times \mathbb R^{p\times (p-1)} \rightarrow\mathbb  R$
is a continuous function that satisfies
$f( c \bl a ,\bl B) = f(\bl a , \bl B)/c$ for $c>0$.
The right side of (\ref{eqn:plimit}) is
uniformly distributed if $\theta_{0j}=0$.
\end{thm}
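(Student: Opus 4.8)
The plan is to reduce the claim to a single application of the continuous mapping theorem, after using the scale-equivariance of $f$ to absorb the explicit factor of $1/\sqrt n$. First I would record the three elementary limits implied by the hypotheses. Since $\hat{\bs\theta}\stackrel{d}{=}\bs Y/\sqrt n + o_p(1/\sqrt n)$ with $\bs Y\sim N_p(\bs\theta_0,\bs\Sigma)$ fixed, multiplying by $\sqrt n$ and applying Slutsky gives $\sqrt n\,\hat{\bs\theta}\stackrel{d}{\rightarrow}\bs Y$. Since $\hat{\bs\Sigma}\stackrel{p}{\rightarrow}\bs\Sigma$, the continuous mapping theorem applied to the $j$th column and to $\bs\Gamma$ yields $\hat\sigma_j\stackrel{p}{\rightarrow}\sigma_j$ and $\hat{\bl G}_j=\bs\Gamma(\hat{\bs\sigma}_j)\stackrel{p}{\rightarrow}\bs\Gamma(\bs\sigma_j)=\bl G_j$, the last fact already being recorded in the proof of Theorem \ref{thm:asyind}.

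Second comes the key manipulation. I would rewrite the FAB adjustment using the equivariance relation $f(c\bl a,\bl B)=f(\bl a,\bl B)/c$. Writing $\hat{\bl G}_j^\top\hat{\bs\theta}=(1/\sqrt n)\,\hat{\bl G}_j^\top(\sqrt n\,\hat{\bs\theta})$ and applying equivariance with $c=1/\sqrt n$ shows that $\hat\sigma_j f(\hat{\bl G}_j^\top\hat{\bs\theta},\hat{\bl G}_j)/\sqrt n=\hat\sigma_j\, f(\hat{\bl G}_j^\top(\sqrt n\,\hat{\bs\theta}),\hat{\bl G}_j)$, with the explicit $1/\sqrt n$ now absent. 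This is precisely the step that prevents the adjustment from vanishing in the local regime $\bs\theta=\bs\theta_0/\sqrt n$: the argument of $f$ is $O_p(1/\sqrt n)$, and equivariance converts this small input into the compensating factor of $\sqrt n$.

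Third, I would assemble joint convergence and apply the continuous mapping theorem. Because $\hat{\bl G}_j$ and $\hat\sigma_j$ converge in probability to constants while $\sqrt n\,\hat{\bs\theta}\stackrel{d}{\rightarrow}\bs Y$, Slutsky's theorem gives $(\sqrt n\,\hat{\bs\theta},\hat{\bl G}_j,\hat\sigma_j)\stackrel{d}{\rightarrow}(\bs Y,\bl G_j,\sigma_j)$. The finite-$n$ $p$-value can then be written as $g(\sqrt n\,\hat{\bs\theta},\hat{\bl G}_j,\hat\sigma_j)$, where $g(\bl w,\bl G,s)=1-|\Phi(w_j/s+s\,f(\bl G^\top\bl w,\bl G))-\Phi(-w_j/s)|$ is continuous on $\{s>0\}$, being a composition of $\Phi$, the continuous $f$, and division by $s>0$. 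Since $\sigma_j>0$, the limit lies in the continuity set of $g$, so the continuous mapping theorem delivers convergence to $g(\bs Y,\bl G_j,\sigma_j)$, which, upon substituting $Z_j=Y_j/\sigma_j$, is exactly the right-hand side of (\ref{eqn:plimit}).

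Finally, for uniformity under $\theta_{0j}=0$: here $Y_j\sim N(0,\sigma_j^2)$, so $Z_j\sim N(0,1)$, and the limiting term is $b_j=\sigma_j f(\bl G_j^\top\bs Y,\bl G_j)$. Because $\bl G_j=\bs\Gamma(\bs\sigma_j)$ has columns orthogonal to $\bs\sigma_j$, the covariance $\mathrm{Cov}(\bl G_j^\top\bs Y,Y_j)=\bl G_j^\top\bs\sigma_j=\bl 0$, and joint normality of $\bs Y$ makes $\bl G_j^\top\bs Y$ independent of $Y_j$; since $\bl G_j$ is deterministic, $b_j$ is a function of $\bl G_j^\top\bs Y$ and hence independent of $Z_j$. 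Corollary \ref{cor:fabp} then gives uniformity of $1-|\Phi(Z_j+b_j)-\Phi(-Z_j)|$. I expect the main obstacle to be bookkeeping rather than depth: ensuring that the equivariance rewriting is applied correctly, that one genuinely has \emph{joint} convergence of the triple $(\sqrt n\,\hat{\bs\theta},\hat{\bl G}_j,\hat\sigma_j)$ before invoking the continuous mapping theorem, and that $g$ is continuous at the limit point, which requires $\sigma_j>0$.
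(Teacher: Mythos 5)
Your proof is correct and follows essentially the same route as the paper's: use the scale-equivariance of $f$ to rewrite the adjustment as $\hat\sigma_j f(\hat{\bl G}_j^\top(\sqrt n\,\hat{\bs\theta}),\hat{\bl G}_j)$, then combine Slutsky's theorem with the continuous mapping theorem. Your write-up is in fact slightly more careful than the paper's, which asserts the convergence of $\hat Z_j$ and of the adjustment term separately where joint convergence of the triple $(\sqrt n\,\hat{\bs\theta},\hat{\bl G}_j,\hat\sigma_j)$ is what the continuous mapping step actually requires, and you also make explicit the independence argument behind the final uniformity claim that the paper leaves implicit.
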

\begin{proof}
The $p$-value function $p(z,b) = 1- | \Phi(z+b) - \Phi(-z) |$ is
continuous in $z$ and $b$, and so by the continuous mapping theorem
it suffices to show that
$\hat Z_j \stackrel{d}{\rightarrow} Z_j$
and $\hat \sigma_j f(\hat{\bl G}_j^\top \hat{\bs \theta}, 
\hat{\bl G}_j)/\sqrt{n}  \stackrel{d}{\rightarrow} 
 \sigma_j f({\bl G}_j^\top \bs Y, \bl G_j)$.
The former convergence follows from the assumptions and Slutsky's theorem.
For the latter, note that by the equivariance of $f$, we have that
$f(\hat{\bl G}_j^\top \hat{\bs \theta}, 
\hat{\bl G}_j)/\sqrt{n}$ is equal to
$f(\hat{\bl G}_j^\top (\sqrt{n}\hat{\bs \theta}), 
\hat{\bl G}_j)$ which converges to
$f( {\bl G}_j^\top \bs Y,\bl G_j)$ by the continuous mapping theorem and
that $\hat{\bl G}_j^\top (\sqrt{n}\hat{\bs \theta})\stackrel{d}{\rightarrow} 
  {\bl G}_j^\top \bs Y$ (which follows from the consistency of $\hat{\bl G}_j$
for $\bl G_j$ and Slutsky's theorem).
\end{proof}

We evaluate  this result empirically in a simulation study with 
$p=30$ and four sample sizes $n\in \{ 200,400,800,1600\}$. 
For each sample size $n$, 
5000 binary vectors $\bs Y\in\mathbb \{0,1\}^n$ were constructed 
with elements simulated as 
$Y_i \sim \text{binary}( e^{\bs\theta^\top\bl x_i}/(1+ e^{\bs\theta^\top\bl x_i} )$, independently for $i=1,\ldots, n$. 
The $p$-dimensional vector of 
regression coefficients $\bs\theta$ had  15 elements equal to $3/\sqrt{n}$
and 15 elements equal to zero, and the 
elements of each $\bl x_i\in \mathbb R^p$ were simulated independently from a  standard normal distribution. 
For each simulated $\bs Y$ vector, the MLE $\hat{\bs\theta}$ 
was obtained, along 
with an estimate $\hat{\bs \Sigma}/n$ of its variance based on the observed information matrix. 
For each estimated coefficient $\hat\theta_j$, 
a $Z$-statistic $\hat Z_j = \sqrt{n}\hat\theta_j/{\hat\sigma_j}$ was computed
and used to construct 
both a Wald and FAB  $p$-value. 
The FAB $p$-value was constructed using a linking model where $\bs\theta$ is an i.i.d.\ sample from a  mixture of a $N(\mu,\tau^2)$ 
distribution and a point-mass at zero. For each $j$, 
the parameters $\mu$, $\tau^2$ and the mixture proportions were estimated 
using marginal maximum likelihood and the 
indirect information $\hat{\bl G}_j^\top \hat{\bs\theta}$. 
The FAB $p$-value under this linking model 
is 
$1-|\Phi( \hat Z_j + 2 \hat\sigma_j\tilde\mu_j /[\sqrt{n}\tilde\tau^2_j])-
   \Phi(-\hat Z_j)|$, 
where $\tilde\mu_j$ and $\tilde\tau^2_j$ are estimated from the linking model
using $\hat{\bl G}_j^\top \hat{\bs\theta}$. 

Some results are shown in Figure \ref{fig:exasym}. The first column of the 
figure shows histograms of the Wald and FAB $p$-values corresponding to 
non-null coefficient values. As can be seen, the FAB $p$-values tend to 
be smaller, consistently across the four sample sizes. 
For example, the fraction of non-null FAB $p$-values below 0.05 
was 
0.4, 0.42, 0.43 and 0.44
for $n\in \{200,400,800,1600\}$ as  compared to 
0.29, 0.31, 0.32 and 0.32 
for the Wald $p$-values. 
The second and third columns of the figure plot 
histograms of the $p$-values corresponding to zero coefficient values. 
Neither the Wald 
nor the FAB null $p$-values are exactly uniformly distributed, however 
the uniformity of these distributions increases with $n$. Interestingly, 
the shape of the null $p$-value distributions is slightly different for 
the smaller sample sizes, although near zero the shapes are similar. 
For example, the fraction of null FAB $p$-values below 0.05
was
0.06, 0.06, 0.05 and 0.05
for $n\in \{200,400,800,1600\}$ as  compared to
0.07, 0.06, 0.05 and 0.05
for the Wald $p$-values.

\section{Discussion}
The FAB $p$-value 
is derived from a
test statistic that has optimal average power  with respect to a  
probability distribution over the possible values of the mean  
of a normal distribution. 
Ideally, this probability distribution over the normal mean
should represent
prior or indirect information. 
However, regardless of the distribution used to construct it, 
the FAB $p$-value 
is uniformly distributed under the null hypotheses. 

In multiparameter settings, 
indirect information about one parameter may be derived from 
data on the other parameters, and then used to construct an adaptive 
FAB $p$-value. One way to do this is with a 
linking model that 
relates the parameters to each other. A Gaussian linking model 
is convenient, in terms of both the estimation of the parameters 
in the linking model, and the simplicity of the FAB $p$-value 
when the indirect information is in the form of 
 a normal distribution. However, other linking models could certainly 
be used, which would yield different optimal test statistics and 
different $p$-value functions. 

Related to this, an  interesting question is the extent to which 
a FAB $p$-value 
based on a Gaussian linking model has good performance
on average with respect to a non-Gaussian linking model. 
It seems plausible  
that in many scenarios, even if the true values of the parameters are better 
represented by a non-Gaussian linking model, adaptive $p$-values
constructed using a Gaussian linking model will at least perform 
better than the UMPU $p$-values, on average across the parameters 
(this was the case for the example in Section 3.3, which used a misspecified linking model). 
To see why, consider the 
simplest case where the direct data for parameter $\theta_j$ is $Z_j\sim N(\theta_j,1)$. By 
Theorem 
\ref{prop:pbetter}, the probability that 
the FAB $p$-value is smaller than the UMPU $p$-value is 
at least  $\Phi( \text{sign}( \tilde b_j ) \times \theta_j)$, 
where $\tilde b_j$ is two times the 
ratio of the expected value  
of $\theta_j$ to its variance, as estimated from the linking model.
So roughly speaking, as long as the linking model is 
good enough so that $\Phi( \text{sign}( \tilde b_j ) \times \theta_j)$ 
is greater than 1/2 on average across $\theta_j$'s, 
then a majority of 
the adaptive FAB $p$-values should be lower than the corresponding 
UMPU $p$-values. 
The most challenging case for the FAB procedure is perhaps 
when the $\theta_j$'s 
are unrelated and centered around zero. However, if the linking model 
contains the mean-zero normal distributions with covariance proportional 
to the identity matrix
(as it did in each example presented in this article), 
then the lack of structure 
among the $\theta_j$'s should be reflected in the estimates of the parameters
in the linking model. In this case, the values of 
the $\tilde b_j$'s should be close to zero and the 
 FAB $p$-values will be approximately the same as the UMPU $p$-values. 

This article has focused on  constructing adaptive 
FAB $p$-values in a variety of multiparameter settings, 
rather than recommending how they might be used. In some settings, such as
the analysis of the ELS data presented in Sections 3.2 and 4.2, it is 
the school-specific $p$-values themselves that may be of interest, 
as the faculty of a given school are presumably primarily concerned 
with, for example, the relationship between SES and test scores in 
their own school rather than what this relationship is on average across 
schools. In other settings it may be of more interest to use the 
$p$-values as inputs into other procedures that maintain global 
error rates. In such cases, it must be remembered that the 
adaptive FAB $p$-values are dependent by construction, and so procedures 
that rely on independent $p$-values must be used with caution. 
However, for most scenarios we expect that this dependence among FAB 
$p$-values will be positive dependence of 
some sort, 
in which case results of 
\citet{clarke_hall_2009} suggest that target error rates 
will still be approximately maintained for some popular 
procedures designed for independent $p$-values. 
Alternatively, FAB $p$-values could be used as inputs into procedures 
that accommodate, and even make use of, the dependence among $p$-values
\citep{efron_2007,romano_shaikh_wolf_2008,sun_cai_2009,fan_han_gu_2012}. 

\medskip

Replication materials, additional numerical examples and software to compute FAB $p$-values for parameters in linear and generalized linear models are available at 
 \url{https://pdhoff.github.io/FABInference}

\bibliography{fabPval}

\end{document}